\def\jump{\vskip0.05in}
\def\E{{\mathbb E}}
\def\R{{\mathbb R}}
\def\pr{{\mbox{\rm Pr}}}
\def\qed{\vrule height8pt width3pt depth0pt}
\def\qed{\vrule height8pt width3pt depth0pt}
\def\cost{{\mbox{\rm cost}}}
\def\naesat{{\mbox{$\mbox{NAESAT}^*$}}}
\def\leaves{{\mbox{\tt leaves}}}
\newtheorem{thm}{Theorem}
\newtheorem{lemma}[thm]{Lemma}
\newtheorem{cor}[thm]{Corollary}
\newenvironment{proof}{\noindent{\it Proof.} }{\qed\jump}
\title{A cost function for similarity-based hierarchical clustering}
\author{Sanjoy Dasgupta}
\begin{document}

\maketitle

\begin{abstract}
The development of algorithms for hierarchical clustering has been hampered by a shortage of precise objective functions. To help address this situation, we introduce a simple cost function on hierarchies over a set of points, given pairwise similarities between those points. We show that this criterion behaves sensibly in canonical instances and that it admits a top-down construction procedure with a provably good approximation ratio.
\end{abstract}

\section{Introduction}

A {\it hierarchical clustering} is a recursive partitioning of a data set into successively smaller clusters. It is represented by a rooted tree whose leaves correspond to the data points, and each of whose internal nodes represents the cluster of its descendant leaves.

A hierarchy of this sort has several advantages over a {\it flat clustering}, which is a partition of the data into a fixed number of clusters. First, there is no need to specify the number of clusters in advance. Second, the output captures cluster structure at all levels of granularity, simultaneously.

There are several well-established methods for hierarchical clustering, the most prominent among which are probably the bottom-up agglomerative methods: single linkage, average linkage, and complete linkage (see, for instance, Chapter 14 of \cite{HTF09}). These are widely used and are part of standard packages for data analysis. Despite this, there remains an aura of mystery about the kinds of clusters that they find. In part, this is because they are specified procedurally rather than in terms of the objective functions they are trying to optimize. For many hierarchical clustering algorithms, it is hard to imagine what the objective function might be.

This is unfortunate, because the use of objective functions has greatly streamlined the development of other kinds of data analysis, such as classification methods. Once a cost function is specified, the problem definition becomes precise, and it becomes possible to study computational complexity and to compare the efficacy of different proposed algorithms. Also, in practice prior information and other requirements often need to be accommodated, and there is now a well-weathered machinery for incorporating these within cost functions, as explicit constraints or as regularization terms.

In this work we introduce a simple cost function for hierarchical clustering: a function that, given pairwise similarities between data points, assigns a score to any possible tree on those points. Through a series of lemmas (Section~\ref{sec:basics}), we build up intuitions about the kinds of hierarchies that this function favors. Its behavior on some canonical examples---points on a line, complete graphs, and planted partitions---can be characterized readily (Section~\ref{sec:examples}), and corresponds to what one would intuitively want in these cases. The cost function turns out to be NP-hard to optimize (Section~\ref{sec:hardness}), a fate it shares with all the common cost functions for flat clustering. However, it has a provably-good approximation algorithm (Section~\ref{sec:alg}): a simple top-down heuristic, variants of which are used widely in graph partitioning.

\subsection*{Related work}

Much of the development of hierarchical clustering has taken place within the context of phylogenetics~\cite{SS63,JS71,F04}. Several of the methods originating in this field, such as average linkage, have subsequently been adapted for more general-purpose data analysis. This literature has also seen a progression of cost functions over taxonomies. One of the earliest of these is {\it parsimony}: given a collection of vectors $x_1, \ldots, x_n \in \{0,1\}^p$, the idea is to find a tree whose leaves correspond to the $x_i$'s and whose internal nodes are also marked with vectors from the same space, so that total change along the branches of the tree (measured by Hamming distance, say) is minimized. Later work has moved towards probabilistic modeling, in which a tree defines a probability space over observable data and the goal is to find the maximum likelihood tree. This is similar in flavor to parsimony, but facilitates the addition of latent variables such as differential rates of mutation along different branches. Finally, there is the fully Bayesian setting in which a prior is placed over all possible trees and the object is to sample from the posterior distribution given the observations. Although some of the probabilistic models have been used for more general data (for instance, \cite{N03}), they tend to be fairly complex and often implicitly embody constraints---such as a ``molecular clock'', in which all leaves are expected to be roughly equidistant from the root---that are questionable outside the biological context. A more basic and interesting question is to find variants of parsimony that behave reasonably in general settings and that admit good algorithms. This is not the agenda of the present paper, however. For instance, our cost function does not assume that the data lie in any particular space, but is based purely on pairwise similarities between points.

The literature on hierarchical clustering also spans many other disciplines, and there are at least two other prominent strains of work addressing some of the same concerns that have motivated our search for a cost function. 

One of these has to do with the statistical consistency of hierarchical clustering methods, as pioneered by Hartigan~\cite{H85}. The aim here is to establish that if data is sampled from a fixed underlying distribution, then the tree structure obtained from the data converges in some suitable sense as the sample size grows to infinity. Only a few methods have been shown to be consistent---single linkage in one dimension, and some variants of it in higher dimension~\cite{CDKL14,EBW15}---and it is an open question to assess the convergence properties, if any, of other common methods such as average linkage. 

Finally, there has been an effort to evaluate hierarchies in terms of cost functions for flat clustering---by evaluating the $k$-means cost, for example, of a hierarchy truncated to $k$ clusters. Results of this kind are available for several standard clustering objectives, including $k$-center, $k$-median, and $k$-means~\cite{DL05,P06,LNRW10}. In the present paper, we take a different approach, treating the hierarchies themselves as first-class objects to be optimized.

On the algorithmic front, the method we ultimately present (Section~\ref{sec:alg}) uses a recursive splitting strategy that is standard in multiway graph partitioning~\cite{KL70}. Our results can be seen as providing some justification for this heuristic, by presenting a concrete optimization problem that it approximately solves.

\section{The cost function and its basic properties}
\label{sec:basics}

We assume that the input to the clustering procedure consists of pairwise similarities between $n$ data points. Although it would perhaps be most natural to receive this information in matrix form, we will imagine that it is presented as a weighted graph, as this will facilitate the discussion. The input, then, is an undirected graph $G = (V,E,w)$, with one node for each point, edges between pairs of similar points, and positive edge weights $w(e)$ that capture the degree of similarity. We will sometimes omit $w$, in which case all edges are taken to have unit weight.

We would like to hierarchically cluster the $n$ points in a way that is mindful of the given similarity structure. To this end, we define a cost function on possible hierarchies. This requires a little terminology. Let $T$ be any rooted, not necessarily binary, tree whose leaves are in one-to-one correspondence with $V$. For any node $u$ of $T$, let $T[u]$ be the subtree rooted at $u$, and let $\leaves(T[u]) \subseteq V$ denote the leaves of this subtree. For leaves $i,j \in V$, the expression $i \vee j$ denotes their lowest common ancestor in $T$. Equivalently, $T[i \vee j]$ is the smallest subtree whose leaves include both $i$ and $j$ (Figure~\ref{fig:some-defns}).

\begin{figure}
\begin{center}
\includegraphics[width=2in]{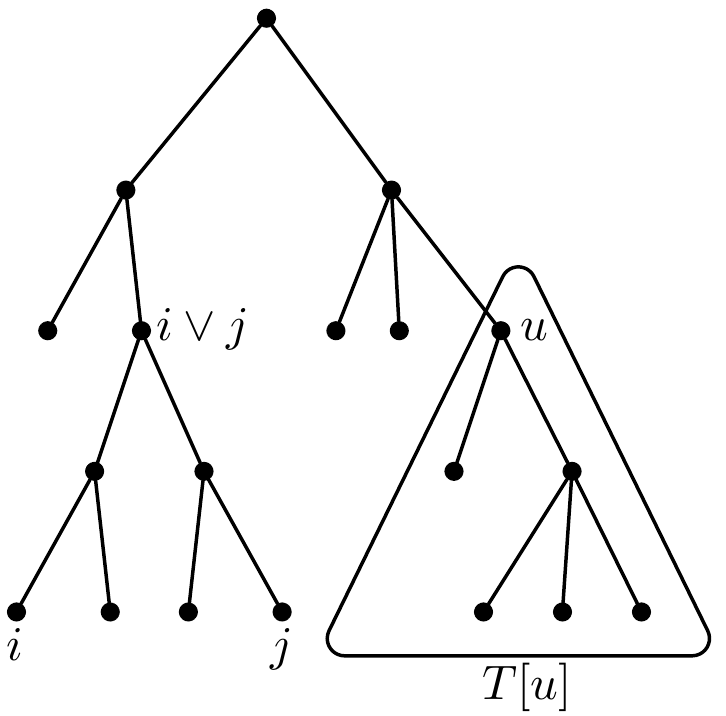}
\end{center}
\caption{Induced subtrees and least common ancestors.}
\label{fig:some-defns}
\end{figure}

The edges $\{i,j\}$ in $G$, and their strengths $w_{ij}$, reflect locality. When clustering, we would like to avoid cutting too many edges. But in a hierarchical clustering, all edges do eventually get cut. All we can ask, therefore, is that edges be cut {\it as far down the tree as possible}. Accordingly, we define the cost of $T$ to be
$$ \mbox{cost}_G(T) = \sum_{\{i,j\} \in E} w_{ij} \, |\leaves(T[i \vee j])| .$$
(Often, we will omit the subscript $G$.) If an edge $\{i,j\}$ of unit weight is cut all the way at the top of the tree, it incurs a penalty of $n$. If it is cut further down, in a subtree that contains $\alpha$ fraction of the data, then it incurs a smaller penalty of $\alpha n$.

\begin{figure}
\begin{center}
\includegraphics[width=4in]{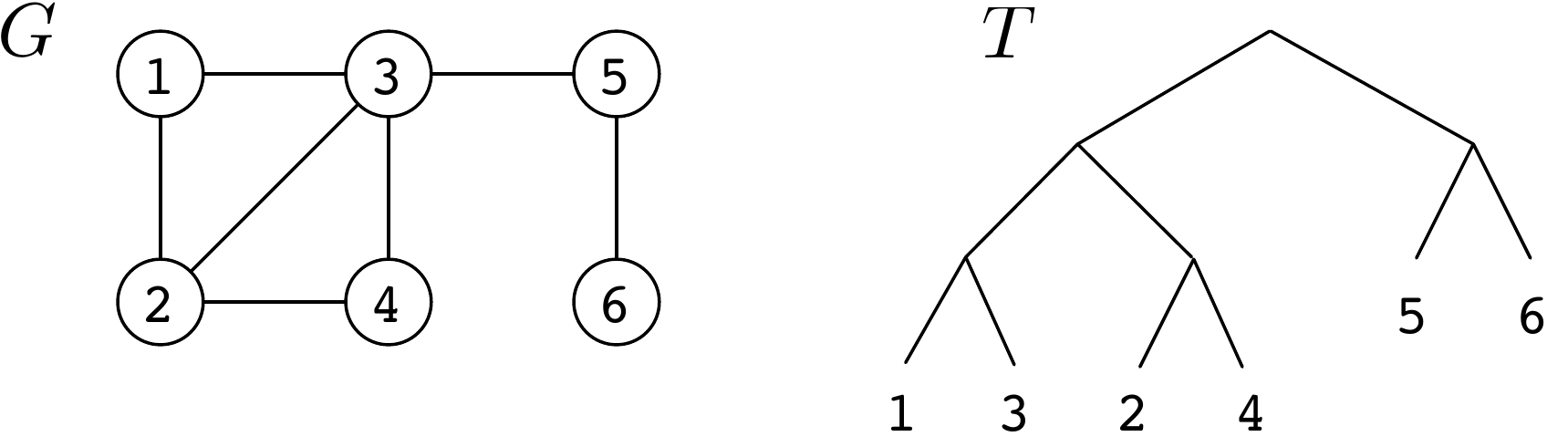}
\end{center}
\caption{A small graph $G$ and a candidate hierarchical clustering.}
\label{fig:toy-example}
\end{figure}

Figure~\ref{fig:toy-example} shows a toy example. Edge $\{3,5\}$ is cut at the root of $T$ and incurs a cost of 6. Edges $\{1,2\}, \{2,3\}, \{3,4\}$ each incur a cost of 4, and the remaining three edges cost 1 apiece. Thus $\cost_G(T) = 6 + 3 \times 4 + 3 \times 1 = 21$.

\subsection{Two interpretations of the cost function}

A natural interpretation of the cost function is in terms of cuts. Each internal node of the tree corresponds to a {\it split} in which some subset of nodes $S \subseteq V$ is partitioned into two or more pieces. For a binary split $S \rightarrow (S_1, S_2)$, the splitting cost is $|S| \, w(S_1, S_2)$, where $w(S_1, S_2)$ is the weight of the cut,
$$
w(S_1, S_2) \ = \ \sum_{\{i,j\} \in E: \ i \in S_1, j \in S_2} w_{ij} .
$$
In the example of Figure~\ref{fig:toy-example}, the top split, $\{1,2,3,4,5,6\} \rightarrow (\{1,2,3,4\}, \{5,6\})$ costs 6, while the subsequent split $\{1,2,3,4\} \rightarrow (\{1,3\}, \{2,4\})$ costs $12$.

This extends in the obvious way to $k$-ary splits $S \rightarrow (S_1, S_2, \ldots, S_k)$, whose cost is $|S| w(S_1, \ldots, S_k)$, with
$$w(S_1, \ldots, S_k) \ = \  \sum_{1 \leq i < j \leq k} w(S_i, S_j) .$$
The cost of a tree $T$ is then the sum, over all internal nodes, of the splitting costs at those nodes:
\begin{equation}
\cost(T) = \sum_{\mbox{\small splits $S \rightarrow (S_1, \ldots, S_k)$ in $T$}} |S| w(S_1, \ldots, S_k) . \label{eq:split-formulation}
\end{equation}
We would like to find the hierarchy $T$ that minimizes this cost.

Alternatively, we can interpret a tree $T$ as defining a distance function on points $V$ as follows:
$$ d_T(i,j) = |\leaves(T[i \vee j])| - 1.$$
This is an ultrametric, that is, $d_T(i,j) \leq \max(d_T(i,k), d_T(j,k))$ for all $i,j,k \in V$. The cost function can then be expressed as
$$ \mbox{cost}(T) = \sum_{\{i,j\} \in E} w_{ij} d_T(i,j) \ \ + \mbox{constant}.$$
Thus, we seek a tree that minimizes the average distance between similar points.

\subsection{Modularity of cost}

An operation we will frequently perform in the course of analysis is to replace a subtree of a hierarchical clustering $T$ by some other subtree. The overall change in cost has a simple expression that follows immediately from the sum-of-splits formulation of equation~(\ref{eq:split-formulation}).
\begin{lemma}
Within any tree $T$, pick an internal node $u$. Suppose we obtain a different tree $T'$ by replacing $T[u]$ by some other subtree $T_u'$ with the same set of leaves. Then
$$ \cost(T') = \cost(T) - \cost(T[u]) + \cost(T_u'),$$
where the costs of $T[u]$ and $T_u'$ are with respect to the original graph restricted to $\leaves(T[u])$.
\label{lemma:modularity}
\end{lemma}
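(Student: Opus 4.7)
The plan is to work from the sum-over-splits formulation in equation~(\ref{eq:split-formulation}) and argue that only the splits ``at or below $u$'' change when we swap $T[u]$ for $T_u'$, and that these contribute exactly $\cost(T[u])$ and $\cost(T_u')$ respectively in the restricted graph.

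First I would set up a bijection between the internal nodes of $T$ and those of $T'$. The internal nodes outside $T[u]$ are shared verbatim by both trees, and the internal nodes inside $T[u]$ (including $u$ itself) get replaced by those of $T_u'$ (including its root, which plays the role of $u$). So $\cost(T)$ and $\cost(T')$ each decompose as a sum $A + B$ and $A + B'$, where $A$ collects the splitting costs at the shared internal nodes (everything outside $T[u]$), and $B$, $B'$ collect the splitting costs at the nodes inside $T[u]$ and $T_u'$ respectively.

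Next I would verify that the ``outside'' contribution $A$ really is identical in the two trees. For a split $S \to (S_1,\ldots,S_k)$ at a node $v$ not in $T[u]$, the partition $(S_1,\ldots,S_k)$ is determined by which of $v$'s children are ancestors of which leaves; since $T$ and $T'$ agree outside $T[u]$ and $\leaves(T[u]) = \leaves(T_u')$ is wholly contained in a single child subtree of $v$ (or disjoint from $T[v]$ entirely), the partition, and hence the splitting cost, is unchanged.

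Then I would show $B = \cost(T[u])$ and $B' = \cost(T_u')$ when the subtree costs are computed with respect to the graph restricted to $L := \leaves(T[u])$. This is the only step with any content: at every split at or below $u$, the partitioned set $S$ is a subset of $L$, so $w(S_1,\ldots,S_k)$ only involves edges of $G$ that lie inside $L$, and these edges have the same weights in the restricted graph. The factor $|S|$ is also intrinsic to the subtree. Summing these equalities over the splits in $T[u]$ (respectively $T_u'$) gives $B = \cost(T[u])$ and $B' = \cost(T_u')$ in the restricted graph, and subtracting yields the claimed identity. I do not anticipate a real obstacle; the main thing to get right is the bookkeeping that the split at $u$ itself is counted inside $B$ (not $A$), so that the ``restricted'' cost of the subtree genuinely includes its top-level split.
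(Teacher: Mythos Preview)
Your proposal is correct and follows exactly the approach the paper indicates: it derives the identity directly from the sum-of-splits formulation~(\ref{eq:split-formulation}), partitioning the splits into those outside and those at-or-below $u$. The paper itself treats this as an immediate consequence of~(\ref{eq:split-formulation}) without writing out the bookkeeping, so your version is simply a more explicit rendering of the same argument.
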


\subsection{The optimal tree is binary}

In the figure below, the tree on the left is at least as costly as that on the right.

\begin{center}
\raisebox{.25in}{\includegraphics[width=2.5in]{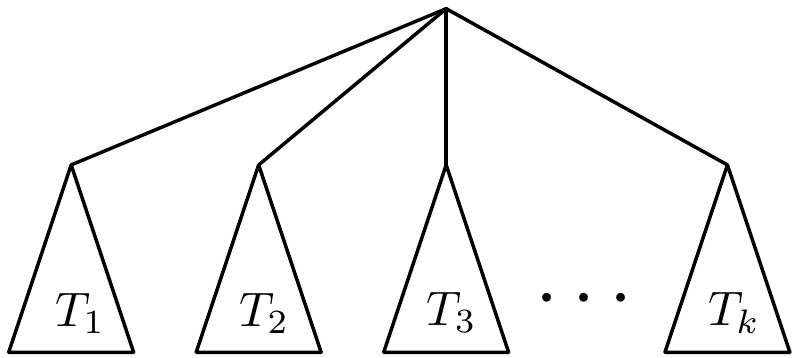}}
\hskip.75in
\includegraphics[width=2in]{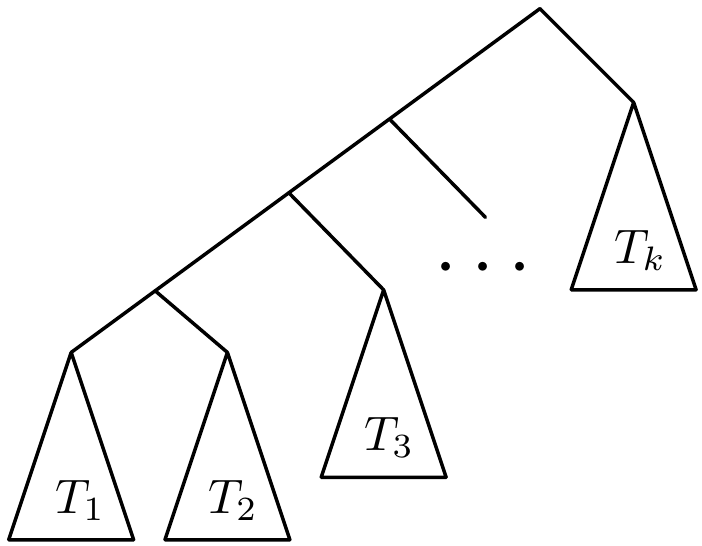}
\end{center}  

\noindent
It follows that there must always exist an optimal tree that is binary.

\subsection{Different connected components must first be split apart}

If $G$ has several connected components, we would expect a hierarchical clustering to begin by pulling these apart. This natural property holds under the cost function we are considering.
\begin{lemma}
Suppose an optimal tree $T$ contains a subtree $T'$ whose leaves induce a subgraph of $G$ that is not connected. Let the topmost split in $T'$ be $\leaves(T') \rightarrow (V_1, V_2)$. Then $w(V_1, V_2) = 0$.
\label{lemma:connected-components}
\end{lemma}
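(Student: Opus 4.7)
The plan is a swap argument driven by Lemma \ref{lemma:modularity}. Suppose for contradiction that $T$ is optimal but its subtree $T'$ has topmost split $\leaves(T') \to (V_1, V_2)$ with $w(V_1, V_2) > 0$, while the induced graph $G' := G|_{\leaves(T')}$ is disconnected. By modularity it is enough to produce some replacement tree $T''$ on $\leaves(T')$ with $\cost_{G'}(T'') < \cost_{G'}(T')$; substituting $T''$ for $T'$ inside $T$ would then strictly lower the total cost, contradicting optimality.

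To build $T''$, pick any connected component $A$ of $G'$ and let $B = \leaves(T') \setminus A$; both are nonempty and $w(A, B) = 0$. I take $T''$ to split at the top into $(A, B)$, and below each side I hang the natural restriction of $T'$: form $T'|_A$ by deleting the leaves outside $A$ and suppressing any internal node left with only one child (likewise for $T'|_B$). Because $w(A, B) = 0$, the top split of $T''$ contributes zero, so $\cost_{G'}(T'') = \cost_{G|_A}(T'|_A) + \cost_{G|_B}(T'|_B)$.

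The inequality $\cost_{G'}(T') \geq \cost_{G'}(T'')$ reduces to a per-edge check. Each edge of $G'$ lies entirely in $A$ or entirely in $B$, since $w(A,B)=0$. The key identity is that for $i, j \in A$, the node $i \vee j$ survives in $T'|_A$ (it has two children whose subtrees already meet $A$, namely those containing $i$ and $j$), and $\leaves(T'|_A[i \vee j]) = \leaves(T'[i \vee j]) \cap A$. Hence each $A$-edge contributes no more to $\cost(T'|_A)$ than to $\cost(T')$, and similarly for $B$-edges, giving $\cost(T'') \leq \cost(T')$.

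Strictness comes directly from $w(V_1, V_2) > 0$: choose an edge $\{i, j\}$ with $i \in V_1$, $j \in V_2$, $w_{ij} > 0$. Its endpoints must share a part, say $A$. In $T'$ they are separated at the root, so $|\leaves(T'[i \vee j])| = |\leaves(T')|$, while in $T''$ their LCA sits within the subtree over $A$ and has at most $|A| < |\leaves(T')|$ leaves, so this edge contributes strictly less. The only nonroutine point to justify cleanly is the restriction/LCA identity above; everything else is direct bookkeeping with the split formulation \eqref{eq:split-formulation}.
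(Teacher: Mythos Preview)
Your argument is correct and follows essentially the same strategy as the paper: build a replacement tree that first separates the disconnected pieces $A,B$ and then restricts $T'$ to each side, yielding a strictly cheaper subtree that contradicts optimality via Lemma~\ref{lemma:modularity}. The only cosmetic difference is the accounting---the paper compares costs split-by-split using formulation~\eqref{eq:split-formulation}, whereas you compare edge-by-edge via the LCA definition; both give the same inequality, and your LCA identity $\leaves(T'|_A[i\vee j]) = \leaves(T'[i\vee j])\cap A$ is exactly what makes the per-edge bookkeeping go through.
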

\begin{proof}
Suppose, by way of contradiction, that $w(V_1, V_2) > 0$. We'll construct a subtree $T''$ with the same leaves as $T'$ but with lower cost---implying, by modularity of the cost function (Lemma~\ref{lemma:modularity}), that $T$ cannot be optimal.

For convenience, write $V_o = \leaves(T')$, and let $U_1, U_2$ be a partition of $V_o$ into two sets that are disconnected from each other (in the subgraph of $G$ induced by $V_o$). $T''$ starts by splitting $V_o$ into these two sets, and then copies $T'$ on each side of the split.

\begin{center}
\includegraphics[width=2in]{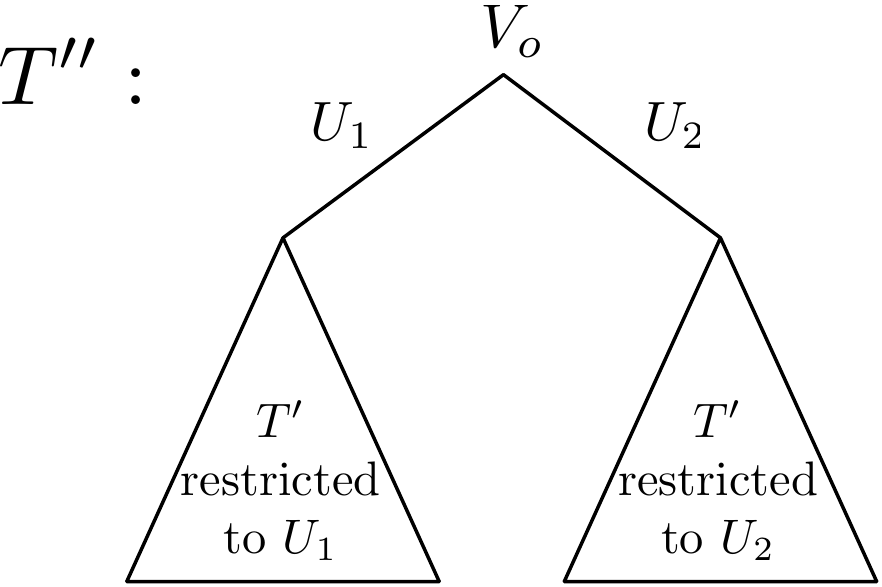}
\end{center}

The top level of $T''$ has cost zero. Let's compare level $\ell + 1$ of $T''$ to level $\ell$ of $T'$. Suppose this level of $T'$ contains some split $S \rightarrow (S_1, S_2)$. The cost of this split---its contribution to the cost of $T'$---is
$$ \mbox{cost of split in $T'$} = |S| w(S_1, S_2) .$$
Now, $T''$ has two corresponding splits, $S \cap U_1 \rightarrow (S_1 \cap U_1, S_2 \cap U_1)$ and $S \cap U_2 \rightarrow (S_1 \cap U_2, S_2 \cap U_2)$. Their combined cost is
\begin{align*}
\mbox{cost of splits in $T''$}
&= |S \cap U_1| w(S_1 \cap U_1, S_2 \cap U_1) + |S \cap U_2| w(S_1 \cap U_2, S_2 \cap U_2) \\
&\leq |S| w(S_1 \cap U_1, S_2 \cap U_1) + |S| w(S_1 \cap U_2, S_2 \cap U_2) \\
&= |S| w(S_1, S_2) \ = \ \mbox{cost of split in $T'$},
\end{align*}
with strict inequality if $S \cap U_1$ and $S \cap U_2$ are each nonempty and $w(S_1, S_2) > 0$. The latter conditions hold for the very first split of $T'$, namely $V_o \rightarrow (V_1, V_2)$.

Thus, every level of $T'$ is at least as expensive as the corresponding level of $T''$ and the first level is strictly more expensive, implying $\mbox{cost}(T'') < \mbox{cost}(T')$, as claimed.
\end{proof}

\section{Illustrative examples}
\label{sec:examples}

To get a little more insight into our new cost function, let's see what it does in three simple and canonical situations: the line graph, the complete graph, and random graphs with planted partitions.

\subsection{The line}

Consider a line on $n$ nodes, in which every edge has unit weight:

\vskip.15in
\begin{center}
\includegraphics[width=4in]{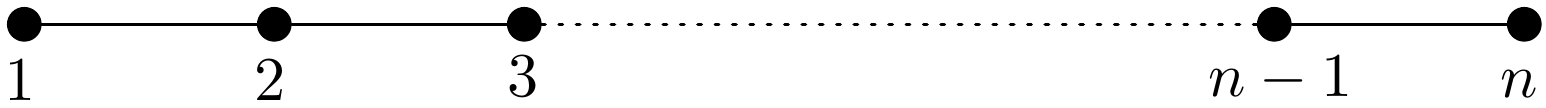}
\end{center}

\vskip.1in

A hierarchical clustering need only employ splits consisting of a single edge. This is because any split that involves removing multiple edges $\{e_1, e_2, \ldots\}$ can be replaced by a series of single-edge splits (first $e_1$, then $e_2$, and so on), with an accompanying reduction in cost.

Let $C(n)$ denote the cost of the best tree for the line on $n$ nodes. The removal of an edge incurs a cost of $n$ and creates two smaller versions of the same problem. Thus, for $n > 1$,
$$ C(n) = n + \min_{1 \leq j \leq n-1} C(j) + C(n-j) .$$
The best strategy is to split as evenly as possible, into sets of size $\lfloor n/2 \rfloor$ and $\lceil n/2 \rceil$, yielding an overall cost $C(n) = n \log_2 n + O(n)$.

On the other hand, the worst splitting strategy is to choose an extremal edge, which when applied successively results in a total cost of
$$ n + (n-1) + (n-2) + \cdots + 2 = \frac{n(n+1)}{2} - 1 .$$

\subsection{The complete graph}

What is the best hierarchical clustering for the complete graph on $n$ nodes, with unit edge weights? It turns out that all trees have exactly the same cost, which is intuitively pleasing---and will be a crucial tool in our later analysis.

\begin{thm}
Let $G$ be the complete graph on $n$ nodes, with unit edge weights. For any tree $T$ with $n$ leaves,
$$ \cost_G(T) = \frac{1}{3}(n^3 - n).$$
\label{thm:complete}
\end{thm}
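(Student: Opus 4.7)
The plan is a double-counting argument that exploits the binary structure of optimal trees. Since the previous subsection has already established that it suffices to consider binary hierarchies, I will assume $T$ is binary (the claim is in fact false for non-binary $T$, as a ternary split of three leaves already shows: the cost is $9$, not $8$). With $G = K_n$ and unit weights every pair is an edge, so I would write
$$\cost_G(T) \;=\; \sum_{i<j}|\leaves(T[i\vee j])| \;=\; \sum_{i<j}\sum_{k\in V}\mathbf{1}[k\in\leaves(T[i\vee j])],$$
and group the resulting triples $(i,j,k)$ according to whether $k\in\{i,j\}$ or not.

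The ``diagonal'' terms with $k=i$ or $k=j$ always contribute $1$, totalling $2\binom{n}{2}=n(n-1)$. The remaining terms range over unordered triples $\{i,j,k\}$ of distinct leaves, each accompanied by a choice of which element plays the role of $k$. The key structural claim is: for any three distinct leaves of a binary tree, exactly one pair ``clusters first'' (its LCA is a strict descendant of the LCA of the whole triple) while the other two pairs share an LCA with the whole triple. Equivalently, this is the ultrametric triangle property for $d_T$. Hence for exactly two of the three choices of which leaf is $k$, the leaf $k$ is a descendant of the LCA of the other pair, contributing $2\binom{n}{3}$.

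Adding the two pieces yields
$$\cost_G(T) \;=\; n(n-1) + 2\binom{n}{3} \;=\; \frac{n(n-1)\bigl[3 + (n-2)\bigr]}{3} \;=\; \frac{n^3-n}{3},$$
which is the desired identity. I do not expect any real obstacle; the only non-trivial step is the ``two out of three'' claim, a one-line case analysis on which pair of the three leaves sits in the deeper of the two subtrees hanging off the triple's LCA. If a more pedestrian route is preferred, one can instead induct on $n$ using Lemma~\ref{lemma:modularity}: the root split $V \to (V_1,V_2)$ with $|V_i|=n_i$ contributes $n \cdot n_1 n_2$ (since all $n_1 n_2$ cross-pairs are edges of $K_n$), the two sub-hierarchies contribute $\tfrac{1}{3}(n_i^3 - n_i)$ by the inductive hypothesis, and the telescoping identity $(a+b)^3 = a^3+b^3+3ab(a+b)$ closes the calculation.
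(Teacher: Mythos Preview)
Your double-counting argument is correct, and you are right to flag the binary-tree hypothesis: a single ternary split on three leaves already gives cost $9$ rather than $(3^3-3)/3=8$, so the identity as stated in the paper holds only for binary $T$. The paper's own proof tacitly makes the same assumption (it enumerates exactly $n-1$ splits, each of the form ``a set of size $m$ into sets of size $k$ and $m-k$''), so your caveat is well taken rather than a defect relative to the paper.

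Your main route is genuinely different from the paper's. The paper argues via a potential function: setting $\Phi$ to be the sum of cubes of the current block sizes, it observes that a split of an $m$-block into pieces of sizes $k$ and $m-k$ decreases $\Phi$ by $m^3-k^3-(m-k)^3=3mk(m-k)$, which is exactly three times the cost of that split; telescoping from $\Phi_0=n^3$ down to $\Phi_{n-1}=n$ gives the result. Your approach instead expands $\sum_{i<j}|\leaves(T[i\vee j])|$ as a sum over ordered triples and uses the ultrametric ``two out of three'' property of binary trees to count the off-diagonal terms. Both are short; yours is perhaps more combinatorially transparent and makes the role of binarity explicit, while the paper's telescoping viewpoint generalizes more readily to the analysis of individual splits (e.g.\ Lemma~\ref{lemma:two-cliques}). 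The inductive alternative you sketch at the end, using $(a+b)^3=a^3+b^3+3ab(a+b)$, is essentially the paper's argument recast split-by-split.
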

\begin{proof}
For a clique on $n$ nodes, a crude upper bound on the best achievable cost is $n \cdot {n \choose 2}$, assessing a charge of $n$ for every edge. We can round this up even further, to $n^3$.

Pick any tree $T$, and uncover it one split at a time, starting at the top. At each stage, we will keep track of the actual cumulative cost of the splits seen so far, along with a crude upper bound $\Phi$ on the cost yet-to-be-incurred. If, for instance, we have uncovered the tree to the point where the bottom nodes correspond to sets of size $n_1, n_2, \ldots$ (summing to $n$), then the bound on the cost-to-come is $\Phi = n_1^3 + n_2^3 + \cdots$. 

Initially, we are at the root, have paid nothing yet, and our crude bound on the cost of what lies below is $\Phi_0 = n^3$. Let's say the $t^{\rm th}$ split in the tree, $1 \leq t \leq n-1$, breaks a set of size $m$ into sets of size $k$ and $m-k$. For this split, the immediate cost incurred is $c_t = mk(m-k)$ and the crude upper bound on the cost yet-to-be-incurred shrinks by 
$$ \Phi_{t-1} - \Phi_t = m^3 - (k^3 + (m-k)^3) = 3mk(m-k) = 3c_t .$$

When we finally reach the leaves, the remaining cost is zero, but our crude upper bound on it is $\Phi_{n-1} = n$. Thus the total cost incurred is
$$ \cost(T) = c_1 + c_2 + \cdots + c_{n-1} = \frac{\Phi_0 - \Phi_1}{3} + \frac{\Phi_1 - \Phi_2}{3} + \cdots +  \frac{\Phi_{n-2} - \Phi_{n-1}}{3} = \frac{\Phi_0 - \Phi_{n-1}}{3} = \frac{n^3 - n}{3},$$
as claimed.
\end{proof}

\subsection{Planted partitions}

The behavior of clustering algorithms on general inputs is usually quite difficult to characterize. As a result, a common form of analysis is to focus upon instances with ``planted'' clusters.

The simplest such model, which we shall call the {\it $(n,p,q)$-planted partition} (for an even integer $n$ and $0 \leq q < p \leq 1$), assumes that there are two clusters, each containing $n/2$ points, and that the neighborhood graph $G = (V = [n], E)$ is constructed according to a generative process. For any pair of distinct nodes $i,j$, an edge is placed between them with probability
$$
\begin{array}{ll}
p & \mbox{if $i,j$ are in the same cluster} \\
q & \mbox{otherwise}
\end{array}
$$
and these ${n \choose 2}$ decisions are made independently. All edges in $E$ are then assigned unit weight. It has been found, for instance, that spectral methods can recover such partitions for suitable settings of $p$ and $q$~\cite{B85a,M01}.

We will also consider what we call the {\it general planted partition model}, which allows for an arbitrary number of clusters $C_1, C_2, \ldots, C_k$, of possibly different sizes, and has a less constrained generative process. For any pair $i,j$, the probability that there is an edge between $i$ and $j$ is
$$
\begin{array}{ll}
>q & \mbox{if $i,j$ are in the same cluster} \\
q & \mbox{otherwise}
\end{array}
$$
and these decisions need not be independent. Once again, the edges have unit weight.

\subsubsection{The general planted partition model}

Fix any set of points $V = [n]$, with unknown underlying clusters $C_1, \ldots, C_k$. Because the neighborhood graph $G$ is stochastic, we will look at the expected cost function,
\begin{align*}
\E [\cost_G(T)] 
&= \sum_{\{i,j\}} \E_G[w_{ij}] \, |\leaves(T[i \vee j])| \\
&= \sum_{\{i,j\}} \pr(\mbox{edge between $i$ and $j$ in $G$}) \, |\leaves(T[{i \vee j}])|.
\end{align*}

\begin{lemma}
For a general planted partition model with clusters $C_1, \ldots, C_k$, define graph $H = (V, F, w)$ to have edges between points in the same cluster, with weights
$$ w(i,j) = \pr(\mbox{edge between $i$ and $j$}) - q  > 0 .$$
In particular, $H$ has $k$ connected components corresponding to the clusters.

Now fix any tree $T$ with leaves $V$. For $G$ generated stochastically according to the partition model,
$$ \E [\cost_G(T)] = \frac{1}{3} q (n^3 - n) + \cost_H(T) .$$
\label{lemma:general-partition}
\end{lemma}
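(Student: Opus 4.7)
My plan is to use linearity of expectation together with a decomposition of the edge probability into a uniform ``background'' term and an ``excess'' term, and then recognize the two resulting pieces via Theorem~\ref{thm:complete} and the definition of $H$.

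First I would write out the expected cost using the definition already given in the excerpt,
$$ \E[\cost_G(T)] \ = \ \sum_{\{i,j\}} \pr(\mbox{edge between $i$ and $j$}) \cdot |\leaves(T[i \vee j])|, $$
and then split the edge probability as $\pr(\mbox{edge}) = q + (\pr(\mbox{edge}) - q)$. By the definition of the general planted partition model, the excess $\pr(\mbox{edge}) - q$ is strictly positive for same-cluster pairs and exactly zero for cross-cluster pairs, so this is precisely the edge weight $w(i,j)$ defining $H$ (extended by zero on non-edges of $H$). Applying linearity of expectation, the expected cost splits into two sums.

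The second sum, $\sum_{\{i,j\}} w(i,j) |\leaves(T[i \vee j])|$, is exactly $\cost_H(T)$ by the definition of the cost function, since the terms with $w(i,j) = 0$ drop out. The first sum, $q \sum_{\{i,j\}} |\leaves(T[i \vee j])|$, is $q$ times the cost of $T$ evaluated on the complete graph $K_n$ with unit weights. By Theorem~\ref{thm:complete}, this cost is independent of $T$ and equals $\frac{1}{3}(n^3 - n)$, giving the first summand $\frac{1}{3} q (n^3 - n)$ in the claimed formula. Adding the two pieces yields the lemma.

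The argument is essentially a two-line calculation once the decomposition is set up; there is no real obstacle. The one conceptual point worth highlighting is that the first sum would otherwise depend on the tree $T$, and it is exactly the invariance established in Theorem~\ref{thm:complete} that lets us replace it by a tree-independent constant, so that the only tree-dependent piece of $\E[\cost_G(T)]$ is $\cost_H(T)$. This is what makes the lemma useful: minimizing the expected cost reduces to minimizing the cost on the ``signal'' graph $H$ whose connected components are exactly the planted clusters.
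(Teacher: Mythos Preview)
Your proposal is correct and follows essentially the same approach as the paper: decompose the edge probability as $q + (\pr(\mbox{edge}) - q)$, identify the two resulting sums as $q \cdot \cost_{K_n}(T)$ and $\cost_H(T)$, and invoke Theorem~\ref{thm:complete} to evaluate the first. The paper's proof is precisely this calculation, with no additional ingredients.
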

\begin{proof}
Letting $K_n$ denote the complete graph on $n$ nodes (with unit edge weights), we have
\begin{align*}
\E [\cost_G(T)]
&= 
\sum_{\{i,j\}} \pr(\mbox{edge between $i$ and $j$ in $G$}) \, |\leaves(T[{i \vee j}])| \\
&=
q \sum_{\{i,j\}} |\leaves(T[i \vee j])| + \sum_{\{i,j\}} (\pr(\mbox{edge between $i$ and $j$ in $G$}) -q) \, |\leaves(T[{i \vee j}])| \\
&=
q \sum_{\{i,j\}} |\leaves(T[i \vee j])| + \sum_{\{i,j\} \in F} w(i,j) \, |\leaves(T[{i \vee j}])| \\
&=
q \, \cost_{K_n}(T) + \cost_H(T).
\end{align*}
We then invoke Theorem~\ref{thm:complete}.
\end{proof}

Recall that the graph $H$ of Lemma~\ref{lemma:general-partition} has exactly $k$ connected components corresponding to the underlying clusters $C_1, \ldots, C_k$. By Lemma~\ref{lemma:connected-components}, any tree that minimizes $\cost_H(T)$, and thus $\E [\cost_G(T)]$, must first pull these components apart.
\begin{cor}
Pick any tree $T$ with leaves $V$. Under the general planted partition model, $T$ is a minimizer of $\E [\cost_G(\cdot)]$ if and only if, for each $1 \leq i \leq k$, $T$ cuts all edges between cluster $C_i$ and $V \setminus C_i$ before cutting any edges within $C_i$.
\end{cor}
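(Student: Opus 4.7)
The plan is to chain together Lemma~\ref{lemma:general-partition} and Lemma~\ref{lemma:connected-components}. Lemma~\ref{lemma:general-partition} identifies $\E[\cost_G(T)]$ with $\cost_H(T)$ up to an additive constant independent of $T$, so I can replace the minimization of $\E[\cost_G(\cdot)]$ by that of $\cost_H(\cdot)$. The key structural feature of $H$ that I would exploit is that its connected components are exactly the clusters $C_1,\ldots,C_k$, and that every edge of $H$ lives strictly inside a cluster and carries strictly positive weight.

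For the forward direction I would descend through $T$ starting at the root. At any internal node whose subtree $T'$ has leaves drawn from more than one cluster, the subgraph of $H$ induced by $\leaves(T')$ is disconnected, so Lemma~\ref{lemma:connected-components} forces the topmost split $\leaves(T') \to (V_1, V_2)$ of $T'$ to satisfy $w_H(V_1,V_2) = 0$. Because $H$'s nonzero edges all sit strictly within clusters, the vanishing of $w_H$ across the split forces each cluster $C_i$ to lie entirely on one side of it. Iterating this observation at every multi-cluster subtree shows that each $C_i$ eventually coincides with $\leaves(T[u_i])$ for some internal node $u_i$ of $T$; translated back into edge-cut language, this is exactly the statement that every edge leaving $C_i$ is cut strictly above every edge internal to $C_i$.

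For the reverse direction I would begin from the topological form of the condition---each $C_i$ is the leaf-set of a subtree of $T$---and apply the split-sum formula~(\ref{eq:split-formulation}) to $H$. Every split that straddles two clusters has weight zero in $H$ and contributes nothing, so $\cost_H(T)$ decomposes as $\sum_{i=1}^k \cost_{H[C_i]}(T[u_i])$. Combined with the forward direction, which shows that any violation of the cut-ordering would pay a strictly higher cost at some multi-cluster split, this identifies the minimizers of $\cost_H$ (equivalently of $\E[\cost_G(\cdot)]$) as precisely the cluster-respecting trees, with the intra-cluster pieces free to be chosen to minimize the corresponding summands. The only real conceptual step is recognizing the dictionary between ``$T$ cuts all inter-cluster edges of $C_i$ before any intra-cluster edge of $C_i$'' and ``$C_i$ is the leaf-set of a subtree of $T$''; once that translation is in place, everything else is a direct unpacking of the two previous lemmas and I expect no substantive obstacle.
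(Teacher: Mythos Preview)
Your forward direction is correct and is exactly what the paper does: the paper's entire justification is the single sentence preceding the corollary, which invokes Lemma~\ref{lemma:general-partition} to pass from $\E[\cost_G(\cdot)]$ to $\cost_H(\cdot)$ and then Lemma~\ref{lemma:connected-components} to force an optimal tree to pull the components of $H$ apart. Your iterated application of Lemma~\ref{lemma:connected-components} down the tree is a faithful, more detailed unpacking of that sentence.

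Your reverse direction, however, has a gap that your own final clause nearly exposes. The decomposition $\cost_H(T) = \sum_{i=1}^k \cost_{H[C_i]}(T[u_i])$ for cluster-respecting $T$ is correct, but it does \emph{not} show that every cluster-respecting tree is a minimizer; it shows only that the minimizers are those cluster-respecting trees whose restrictions $T[u_i]$ are themselves optimal for $H[C_i]$. In the general planted partition model the intra-cluster edge probabilities may vary freely above $q$, so $H[C_i]$ is a clique with nonuniform positive weights, and different subtrees on $C_i$ can have genuinely different costs---Theorem~\ref{thm:complete} applies only to unit weights. Hence the ``if'' direction of the corollary, read literally, does not follow from your argument (and the paper supplies no separate argument for it either). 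Your qualifier ``with the intra-cluster pieces free to be chosen to minimize the corresponding summands'' is in fact the correct characterization of the minimizers; you should state that conclusion plainly rather than asserting that the minimizers are ``precisely the cluster-respecting trees.''
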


Thus, in expectation, the cost function behaves in a desirable manner for data from a planted partition. It is also interesting to consider what happens when sampling error is taken into account. This requires somewhat more intricate calculations, for which we restrict attention to simple planted partitions.

\subsubsection{High probability bounds for the simple planted partition model}

Suppose graph $G = (V = [n], E)$ is drawn according to an $(n,p,q)$-planted partition model, as defined at the beginning of this section. We have seen that a tree $T$ with minimum {\it expected} cost $\E [ \cost_G(T)]$ is one that begins with a bipartition into the two planted clusters (call them $L$ and $R$).

Moving from expectation to reality, we hope that the optimal tree for $G$ will contain an almost-perfect split into $L$ and $R$, say with at most an $\epsilon$ fraction of the points misplaced. To this end, for any $\epsilon > 0$, define a tree $T$ (with leaves $V$) to be {\it $\epsilon$-good} if it contains a split $S \rightarrow (S_1, S_2)$ satisfying the following two properties.
\begin{itemize}
\item $|S \cap L| \geq (1-\epsilon) n/2$ and $|S \cap R| \geq (1-\epsilon)n/2$. That is, $S$ contains almost all of $L$ and of $R$.
\item $|S_1 \cap L| \leq \epsilon n/2$ and $|S_2 \cap R| \leq \epsilon n/2$ (or with $S_1, S_2$ switched). That is, $S_1 \approx R$ and $S_2 \approx L$, or the other way round.
\end{itemize}
As we will see, a tree that is not $\epsilon$-good has expected cost quite a bit larger than that of optimal trees and, for $\epsilon \approx ((\log n)/n)^{1/2}$, this difference overwhelms fluctuations due to variance in $G$.

To begin with, recall from Lemma~\ref{lemma:general-partition} that when $G$ is drawn from the planted partition model, the expected cost of a tree $T$ is
\begin{equation}
\E [ \cost_G(T) ] \ = \ \frac{1}{3} q (n^3 - n) + (p-q) \cost_H(T),
\label{eq:expected-cost}
\end{equation}
where $H$ is a graph consisting of two disjoint cliques of size $n/2$ each, corresponding to $L$ and $R$, with edges of unit weight.

We will also need to consider subgraphs of $H$. Therefore, define $H(\ell,r)$ to be a graph with $\ell + r$ nodes consisting of two disjoint cliques of size $\ell$ and $r$. We have already seen (Lemma~\ref{lemma:connected-components} and Theorem~\ref{thm:complete}) that the optimal cost for such a graph is
$$ C(\ell,r) \ \stackrel{\rm def}{=} \ \frac{1}{3}(\ell^3-\ell) + \frac{1}{3} (r^3 - r),$$
and is achieved when the top split is into $H(\ell,0)$ and $H(0,r)$. Such a split can be described as {\it laminar} because it conforms to the natural clusters. We will also consider splits of $H(\ell,0)$ or of $H(0,r)$ to be laminar, since these do not cut across different clusters. As we shall now see, any non-laminar split results in increased cost.
\begin{lemma}
Consider any hierarchical clustering $T$ of $H(\ell,r)$ whose top split is into $H(\ell_1,r_1)$, $H(\ell_2,r_2)$. Then
$$ \cost_{H(\ell,r)}(T) \geq C(\ell,r) + \ell_1\ell_2 r + r_1 r_2 \ell .$$
\label{lemma:two-cliques}
\end{lemma}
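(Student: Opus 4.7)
The plan is to apply the modularity lemma at the top split, lower-bound each of the two resulting subtree contributions by the known optimum cost of its two-clique subgraph, and then verify by direct algebra that the pieces add up exactly to the claimed quantity.

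First, I would record the contribution of the top split. In $H(\ell,r)$ the only edges are those inside the two cliques, so the top split---which partitions the $\ell$-clique into pieces of sizes $\ell_1,\ell_2$ and the $r$-clique into pieces of sizes $r_1,r_2$---cuts exactly $\ell_1\ell_2+r_1r_2$ edges. The set being split has size $\ell+r$, so this contributes $(\ell+r)(\ell_1\ell_2+r_1r_2)$ to $\cost_{H(\ell,r)}(T)$.

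Next, by Lemma~\ref{lemma:modularity} the remaining cost is $\cost_{H(\ell_1,r_1)}(T_1)+\cost_{H(\ell_2,r_2)}(T_2)$, where $T_1,T_2$ are the subtrees hanging off the root. Since each $H(\ell_i,r_i)$ is itself a disjoint union of two cliques, the remark just preceding the lemma statement (combining Lemma~\ref{lemma:connected-components} with Theorem~\ref{thm:complete}) gives the lower bound $\cost_{H(\ell_i,r_i)}(T_i)\geq C(\ell_i,r_i)$.

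It then remains to verify the identity
\[
(\ell+r)(\ell_1\ell_2+r_1r_2)+C(\ell_1,r_1)+C(\ell_2,r_2)\;=\;C(\ell,r)+\ell_1\ell_2\,r+r_1r_2\,\ell.
\]
Using $\ell_1+\ell_2=\ell$ and $r_1+r_2=r$, the linear pieces of $C$ telescope, and the elementary identity $a_1^3+a_2^3-(a_1+a_2)^3=-3a_1a_2(a_1+a_2)$, applied separately to the $\ell$'s and the $r$'s, yields $C(\ell_1,r_1)+C(\ell_2,r_2)-C(\ell,r)=-\ell\,\ell_1\ell_2-r\,r_1r_2$. Expanding $(\ell+r)(\ell_1\ell_2+r_1r_2)=\ell\,\ell_1\ell_2+r\,\ell_1\ell_2+\ell\,r_1r_2+r\,r_1r_2$ and summing, the $\ell\,\ell_1\ell_2$ and $r\,r_1r_2$ terms cancel against the deficit from the $C$-difference, leaving exactly $r\,\ell_1\ell_2+\ell\,r_1r_2$, which is the desired excess over $C(\ell,r)$. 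There is no real obstacle here; the only care needed is in arranging the bookkeeping so that the cubic cancellation is transparent.
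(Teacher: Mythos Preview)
Your proof is correct and follows essentially the same approach as the paper: compute the cost of the top split, lower-bound each subtree's cost by the optimal two-clique cost $C(\ell_i,r_i)$, and verify the algebraic identity. The paper presents the algebra more tersely, but the argument is identical.
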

\begin{proof}
The top split cuts $\ell_1\ell_2 + r_1r_2$ edges. The cost of the two resulting subtrees is at least $C(\ell_1,r_1) + C(\ell_2,r_2)$. Adding these together,
\begin{align*}
\cost(T)
&\geq
(\ell_1 \ell_2 + r_1r_2)(\ell+r) + C(\ell_1,r_1) + C(\ell_2,r_2) \\
&=
(\ell_1 \ell_2 + r_1r_2)(\ell+r) + \frac{1}{3} \left( \ell_1^3 + \ell_2^3 - \ell_1 - \ell_2 + r_1^3 + r_2^3 - r_1 - r_2 \right) \\
&=
C(\ell,r) + \ell_1\ell_2 r + r_1r_2 \ell,
\end{align*}
as claimed.
\end{proof}

Now let's return to the planted partition model.
\begin{lemma}
Suppose graph $G = (V = [n], E)$ is generated from the $(n,p,q)$-planted partition model with $p > q$. Let $T^*$ be a tree whose topmost split divides $V$ into the planted clusters $L,R$. Meanwhile, let $T$ be a tree that is not $\epsilon$-good, for some $0 < \epsilon < 1/6$. Then
$$ \E[\cost_G(T)] > \E[\cost_G(T^*)] + \frac{1}{16} \epsilon (p-q) n^3 .$$
\label{lemma:imperfect}
\end{lemma}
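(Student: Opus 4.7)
The first step is to translate the probabilistic statement into a deterministic one. By equation (\ref{eq:expected-cost}), together with the fact that $\cost_H(T^*) = C(n/2, n/2)$ (combine Lemma \ref{lemma:connected-components} applied to the two cliques of $H$ with Theorem \ref{thm:complete} on each half), the claim reduces to showing $\cost_H(T) - C(n/2, n/2) > \frac{1}{16}\epsilon n^3$. Unrolling Lemma \ref{lemma:two-cliques} recursively (using $\ell^3 - \ell_1^3 - \ell_2^3 = 3\ell_1 \ell_2 \ell$ whenever $\ell_1 + \ell_2 = \ell$) yields the identity
$$ \cost_H(T) - C(n/2, n/2) \ = \ \sum_u \bigl( \ell_1^u \ell_2^u r^u + r_1^u r_2^u \ell^u \bigr) $$
over internal nodes $u$, where $(\ell^u, r^u)$ is the count of $L, R$ among the leaves of $T[u]$ and $(\ell_i^u, r_i^u)$ the same at its two children (by Section \ref{sec:basics}, we may assume $T$ is binary).

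Next I extract enough from this sum by descending a specific path. Define the \emph{main line} $v_0, v_1, \ldots, v_T$ starting at the root by repeatedly moving to the unique child whose set contains at least $(1-\epsilon)n/2$ of each of $L$ and $R$; uniqueness is forced because two children with $L$-count $\geq (1-\epsilon)n/2$ each would overflow the parent's $L$-count (at most $n/2 < 2(1-\epsilon)n/2$ for $\epsilon<1/2$). The line terminates at an internal node $v_T$ whose set meets the size condition but neither child does. For each $t < T$, the off-path child has both $L$- and $R$-counts at most $\epsilon n/2$, so $v_t$'s term in the sum is at least $(1-\epsilon)^2 (n/2)^2 \bigl[(\ell^{v_t} - \ell^{v_{t+1}}) + (r^{v_t} - r^{v_{t+1}})\bigr]$. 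Telescoping yields a main-line contribution of at least $(1-\epsilon)^2 (n^2/4)(\Delta_\ell + \Delta_r)$, where $\Delta_\ell = n/2 - \ell^{v_T}$ and $\Delta_r = n/2 - r^{v_T}$ both lie in $[0, \epsilon n/2]$.

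At $v_T$, the split must further fail $\epsilon$-laminarity (else $T$ would be $\epsilon$-good via $v_T$). Classifying each child of $v_T$ as L-heavy, R-heavy, or mixed (i.e., neither count $\geq (1-\epsilon)n/2$), the ``one heavy in $L$, one heavy in $R$'' configuration is precisely $\epsilon$-laminar and excluded, while ``both heavy in the same cluster'' overflows $n/2$, leaving only ``both mixed'' or ``one mixed, one heavy in one direction.'' In the latter (tightest) case---say mixed plus L-heavy---both $r$-counts at the children lie in $[0, (1-\epsilon)n/2)$ and sum to $n/2 - \Delta_r$, forcing $r_1 r_2 \geq (\epsilon n/2 - \Delta_r)(1-\epsilon)n/2$. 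The $v_T$ term is then at least $(1-\epsilon)^2 (n^2/4)(\epsilon n/2 - \Delta_r)$, and combined with the main-line bound the $\Delta_r$ terms cancel, giving total excess $\geq (1-\epsilon)^2(n^2/4)(\epsilon n/2 + \Delta_\ell) \geq (1-\epsilon)^2 \epsilon n^3/8$. Since $\epsilon < 1/6$ implies $(1-\epsilon)^2 > 1/2$, this strictly exceeds $\epsilon n^3/16$; the ``both mixed'' case gives an even stronger bound by the symmetric argument applied to both $\ell$ and $r$.

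The main obstacle is the case analysis at $v_T$: extracting from the failure of $\epsilon$-laminarity a lower bound on $\ell_1 \ell_2 + r_1 r_2$ whose form combines cleanly with the telescoping main-line contribution so that the residual $\Delta_\ell, \Delta_r$ terms cancel and an unconditional multiple of $\epsilon n^3$ remains.
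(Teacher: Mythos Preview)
Your argument is correct and follows essentially the same route as the paper's: reduce via equation~(\ref{eq:expected-cost}) to bounding $\cost_H(T) - C(n/2,n/2)$, descend from the root along the chain of nodes still containing a $(1-\epsilon)$-fraction of each cluster, and do a case analysis at the terminal node $v_T$ using the per-split excess from Lemma~\ref{lemma:two-cliques}. Your exact identity $\cost_H(T) - C(n/2,n/2) = \sum_u(\ell_1^u\ell_2^u r^u + r_1^u r_2^u \ell^u)$ is a clean sharpening of the paper's recursive use of Lemma~\ref{lemma:two-cliques} (and follows from Theorem~\ref{thm:complete} applied to the restrictions of $T$ to $L$ and to $R$). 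The paper organizes the case analysis as ``some cluster is split nontrivially at $v_T$'' versus ``one side of the split is small in both clusters,'' handling the first with the single $v_T$ term and the second with the path sum; you instead always carry the telescoped path sum and classify the children of $v_T$ as heavy/mixed, which amounts to the same dichotomy. One small wording issue: ``one $L$-heavy, one $R$-heavy'' implies the split is $\epsilon$-laminar, but the converse need not hold (an $\epsilon$-laminar split could have one child mixed in your sense); fortunately your argument only uses the forward implication, so correctness is unaffected.
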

\begin{proof}
Invoking equation~(\ref{eq:expected-cost}), we have
\begin{align*}
\E[\cost_G(T)] - \E[\cost_G(T^*)]
&=
(p-q) \left( \cost_{H(n/2,n/2)}(T) - \cost_{H(n/2,n/2)}(T^*) \right) \\
&=
(p-q) \left( \cost_{H(n/2,n/2)}(T) - C(n/2,n/2) \right) . 
\end{align*}
Thus our goal is to lower-bound the excess cost of $T$, above the optimal value of $C(n/2,n/2)$, due to the non-laminarity of its splits. Lemma~\ref{lemma:two-cliques} gives us the excess cost due to any particular split in $T$, and we will, in general, need to sum this over several splits.

Consider the uppermost node in $T$ whose leaves include at least a $(1-\epsilon)$ fraction of $L$ as well as of $R$. Call these points $L_o \subseteq L$ and $R_o \subseteq R$, and denote the split at this node by
$$ (L_o \cup R_o) \rightarrow (L_1 \cup R_1, L_2 \cup R_2), $$
where $L_1, L_2$ is a partition of $L_o$ and $R_1, R_2$ is a partition of $R_o$. It helps to think of $L_o$ has having been divided into a smaller part and a larger part; and likewise with $R_o$. Since $T$ is not $\epsilon$-good, it must either be the case that one of the smaller parts is not small enough, or that the two smaller parts have landed up on the same side of the split instead of on opposite sides. Formally, one of the two following situations must hold:
\begin{enumerate}
\item $\min(|L_1|, |L_2|) > \epsilon n/2$ or $\min(|R_1|, |R_2|) > \epsilon n/2$
\item ($|L_1| \leq \epsilon n/2$ and $|R_1| \leq \epsilon n/2$) or ($|L_2| \leq \epsilon n/2$ and $|R_2| \leq \epsilon n/2$)
\end{enumerate}
In the first case, assume without loss of generality that $|L_1|, |L_2| > \epsilon n/2$. Then the excess cost due to the non-laminarity of the split is, from Lemma~\ref{lemma:two-cliques}, at least
$$ |L_1| \cdot |L_2| \cdot |R| > \frac{\epsilon n}{2} \cdot \frac{(1-2\epsilon) n}{2} \cdot \frac{(1-\epsilon) n}{2} $$
by construction.

In the second case, suppose without loss of generality that $|L_1|, |R_1| \leq \epsilon n/2$. Since we have chosen the uppermost node whose leaves include $(1-\epsilon)$ of both $L$ and $R$, we may also assume $|L_2| < (1-\epsilon)n/2$. The excess cost of this non-laminar split and of all other splits $S \rightarrow (S_1, S_2)$ on the path up to the root is, again by Lemma~\ref{lemma:two-cliques}, at least
\begin{align*}
\sum_{S \rightarrow (S_1, S_2)} |L \cap S_1| \cdot |L \cap S_2| \cdot |R \cap S|
&\geq
\sum_{S \rightarrow (S_1, S_2)} \min(|L \cap S_1|, |L \cap S_2|) \cdot \frac{(1-2\epsilon)n}{2} \cdot \frac{(1-\epsilon)n}{2} \\
&>
\frac{\epsilon n}{2} \cdot \frac{(1-2\epsilon)n}{2} \cdot \frac{(1-\epsilon)n}{2},
\end{align*}
where the last inequality comes from observing that the smaller portions of each split of $L$ down from the root have a combined size of exactly $|L \setminus L_2| > \epsilon n/2$.

In either case, we have
$$
\cost_{H(n/2,n/2)}(T) 
\ > \ C(n/2,n/2) + \epsilon(1-\epsilon)(1-2\epsilon)\frac{n^3}{8}
\ \geq \ C(n/2,n/2) + \frac{\epsilon n^3}{16},
$$
which we substitute into our earlier characterization of expected cost.
\end{proof}

Thus there is a significant difference in expected cost between an optimal tree and one that is not $\epsilon$-good for large enough $\epsilon$. We now study the discrepancy between $\cost(T)$ and its expectation.
\begin{lemma}
Let $G = (V = [n], E)$ be a random draw from the $(n,p,q)$-planted partition model. Then for any $0 < \delta < 1$, with probability at least $1-\delta$,
$$
\max_{T} \left| \cost_G(T) - \E[\cost_G(T)] \right|
\ \leq \
\frac{n^2}{2} \sqrt{2n \ln 2n + \ln \frac{2}{\delta}},
$$
where the maximum is taken over all trees with leaves $V$.
\label{lemma:large-deviation}
\end{lemma}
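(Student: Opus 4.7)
My plan is to treat $\cost_G(T)$ for a fixed tree $T$ as a sum of independent bounded random variables, apply Hoeffding's inequality, and then union bound over all trees.

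For a fixed tree $T$, writing $c_{ij} = |\leaves(T[i \vee j])|$ and $X_{ij} = \mathbb{1}[\{i,j\} \in E]$, we have $\cost_G(T) = \sum_{\{i,j\}} c_{ij} X_{ij}$, where the $X_{ij}$ are mutually independent Bernoulli variables (this is the one place we crucially use the independence assumption of the simple planted partition model). Each summand lies in $[0, c_{ij}] \subseteq [0,n]$, so Hoeffding's inequality gives
$$\pr\bigl[\,|\cost_G(T) - \E[\cost_G(T)]| > t\,\bigr] \;\leq\; 2\exp\!\left(-\frac{2t^2}{\sum_{\{i,j\}} c_{ij}^2}\right).$$
Since $c_{ij} \leq n$ and there are at most $\binom{n}{2} \leq n^2/2$ pairs, the denominator is at most $n^4/2$, yielding the tail bound $2\exp(-4t^2/n^4)$.

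Next I would union bound over all trees. Without loss of generality we restrict to trees in which every internal node has at least two children (unary internal nodes do not change the cost), so the tree has at most $2n-1$ nodes total. A crude upper bound on the number $N_n$ of such rooted labeled-leaf trees is obtained by encoding each tree through the parent function on its nodes: this gives $N_n \leq (2n)^{2n}$, i.e., $\ln N_n \leq 2n \ln(2n)$. Applying the union bound and choosing $t$ so that $2 N_n \exp(-4t^2/n^4) \leq \delta$ leads exactly to
$$t \;=\; \frac{n^2}{2}\sqrt{\ln(2 N_n/\delta)} \;\leq\; \frac{n^2}{2}\sqrt{2n\ln(2n) + \ln(2/\delta)},$$
which is the claimed bound.

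There is no real conceptual obstacle here; the only mildly delicate point is producing a bound on $N_n$ that is small enough to match the stated right-hand side. The parent-function encoding with $2n-1$ nodes is just loose enough to give $\ln N_n \leq 2n \ln(2n)$, matching the $2n\ln(2n)$ term exactly; anything much cruder (e.g., counting general labeled trees on $2n$ vertices via $(2n)^{2n-2}$ plus leaf labelings) would still work, but one should verify the constants line up. Everything else is routine: independence of edge indicators, boundedness of each term by $n$, and a single application of Hoeffding combined with the union bound.
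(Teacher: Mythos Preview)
Your proposal is correct and essentially identical to the paper's proof: the paper uses McDiarmid's inequality (which for a linear function of independent variables is exactly Hoeffding) with the same $n^4/2$ sum-of-squared-ranges bound, and bounds the number of trees by $(2n)^{2n}$ via a bottom-up merging encoding rather than your parent-function encoding. The arithmetic and the final union bound match line for line.
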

\begin{proof}
Fix any tree $T$. Then $\cost_G(T)$ is a function of ${n \choose 2}$ independent random variables (the presence or absence of each edge); moreover, any one variable can change it by at most $n$. Therefore, by McDiarmid's inequality~\cite{M89}, for any $t > 0$,
$$ \pr(|\cost_G(T) - \E \cost_G(T)| > t n^2)
\ \leq \
2 \exp\left( - \frac{2t^2n^4}{{n \choose 2} n^2}  \right)
\ \leq \
2e^{-4t^2} .
$$
Next, we need an upper bound on the number of possible hierarchical clusterings on $n$ points. Imagine reconstructing any given tree $T$ in a bottom-up manner, in a sequence of $n-1$ steps, each of which involves merging two existing nodes and assigning them a parent. The initial set of points can be numbered $1,\ldots, n$ and new internal nodes can be assigned numbers $n+1,\ldots, 2n-1$ as they arise. Each merger chooses from at most $(2n)^2$ possibilities, so there are at most $(2n)^{2n}$ possible trees overall.

Taking a union bound over these trees, with $4t^2 = \ln (2(2n)^{2n} / \delta)$, yields the result.
\end{proof}

Putting these pieces together gives a characterization of the tree that minimizes the cost function, for data from a simple planted partition.
\begin{thm}
Suppose $G = (V = [n], E)$ is generated at random from an $(n,p,q)$-planted partition model. Let $T$ be the tree that minimizes $\cost_G(\cdot)$.  Then for any $\delta > 0$, with probability at least $1-\delta$, $T$ is $\epsilon$-good for
$$ \epsilon = \frac{16}{p-q} \sqrt{\frac{2\ln 2n}{n} + \frac{1}{n^2} \ln \frac{2}{\delta}} ,$$
provided this quantity is at most $1/6$.
\end{thm}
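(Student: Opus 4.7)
The plan is to combine the two preceding lemmas via a standard empirical-process argument: Lemma~\ref{lemma:imperfect} provides a gap in \emph{expected} cost between the laminar tree $T^*$ and any tree that is not $\epsilon$-good, while Lemma~\ref{lemma:large-deviation} controls the deviation between actual and expected cost, uniformly across all trees. If the deviation bound is smaller than half the expected gap, then any tree that is not $\epsilon$-good cannot be the minimizer of $\cost_G(\cdot)$.

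More concretely, first I would apply Lemma~\ref{lemma:large-deviation} to obtain, with probability at least $1-\delta$, the simultaneous bound
$$ |\cost_G(T') - \E[\cost_G(T')]| \ \leq \ \Delta \ := \ \frac{n^2}{2}\sqrt{2n \ln 2n + \ln(2/\delta)} $$
for every tree $T'$ with leaves $V$. On this event, for the cost-minimizer $T$ and the laminar reference tree $T^*$, the optimality of $T$ gives $\cost_G(T) \leq \cost_G(T^*)$, and the uniform deviation bound upgrades this to
$$ \E[\cost_G(T)] - \E[\cost_G(T^*)] \ \leq \ \cost_G(T) - \cost_G(T^*) + 2\Delta \ \leq \ 2\Delta . $$

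Next I would argue by contraposition: suppose $T$ were not $\epsilon$-good for the stated value of $\epsilon$. Lemma~\ref{lemma:imperfect} then forces
$$ \E[\cost_G(T)] - \E[\cost_G(T^*)] \ > \ \tfrac{1}{16}\epsilon(p-q) n^3 , $$
which, combined with the previous display, yields $\tfrac{1}{16}\epsilon(p-q)n^3 < 2\Delta$. Rearranging,
$$ \epsilon \ < \ \frac{32\Delta}{(p-q)n^3} \ = \ \frac{16}{p-q}\sqrt{\frac{2\ln 2n}{n} + \frac{1}{n^2}\ln\frac{2}{\delta}} , $$
contradicting the choice of $\epsilon$. (The hypothesis $\epsilon \leq 1/6$ is exactly what is needed to invoke Lemma~\ref{lemma:imperfect}.)

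There is no real obstacle here beyond bookkeeping; the two input lemmas have already done the hard work of (a) identifying the quantitative gap in expected cost between laminar and non-laminar behavior and (b) controlling sampling fluctuations. The only item that requires a moment of care is making sure the uniform deviation bound is applied to both $T$ and $T^*$ simultaneously (hence the factor of $2\Delta$ rather than $\Delta$), and that the algebra reconciling $\Delta$ with the stated $\epsilon$ works out cleanly as above.
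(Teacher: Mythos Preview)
Your proposal is correct and follows essentially the same argument as the paper: apply the uniform deviation bound (Lemma~\ref{lemma:large-deviation}) to both $T$ and $T^*$ to get $\E[\cost_G(T)] - \E[\cost_G(T^*)] \leq 2\Delta = n^2\sqrt{2n\ln 2n + \ln(2/\delta)}$, then combine with the gap from Lemma~\ref{lemma:imperfect} to reach a contradiction. The paper's write-up is slightly terser but the logic and the arithmetic are identical.
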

\begin{proof}
Let $T^*$ be a tree whose top split divides $V$ into its true clusters, and let $T$ be the minimizer of $\cost_G(\cdot)$. Since $\cost_G(T) \leq \cost_G(T^*)$, it follows from Lemma~\ref{lemma:large-deviation} that
$$
\E [\cost_G(T)] \ \leq \ 
\E[\cost_G(T^*)] + n^2 \sqrt{2n \ln 2n + \ln \frac{2}{\delta}}
.$$
Now, if $T$ fails to be $\epsilon$-good for some $0 < \epsilon \leq 1/6$, then by Lemma~\ref{lemma:imperfect},
$$
\epsilon
\ < \ 
\frac{16}{(p-q)n^3} \left( \E [\cost_G(T)] - \E [\cost_G(T^*)] \right) .
$$
Combining these inequalities yields the theorem statement.
\end{proof}

In summary, when data is generated from a simple partition model, the tree that minimizes $\cost(T)$ almost perfectly respects the planted clusters, misplacing at most an $O(\sqrt{(\ln n)/n})$ fraction of the data.

\section{Hardness of finding the optimal clustering}
\label{sec:hardness}

In this section, we will see that finding the tree that minimizes the cost function is NP-hard. We begin by considering a related problem.

\subsection{Maximizing the cost function}

Interestingly, the problem of {\it maximizing} $\cost(T)$ is equivalent to the problem of minimizing it. To see this, let $G = (V,E)$ be any graph with unit edge weights, and let $G^c = (V, E^c)$ denote its complement. Pick any tree $T$ and assess its suitability as a hierarchical clustering of $G$ and also of $G^c$:
\begin{align*}
\cost_G(T) + \cost_{G^c}(T) 
&=
\sum_{\{i,j\} \in E} |\leaves(T[i \vee j])| + \sum_{\{i,j\} \in E^c} |\leaves(T[i \vee j])| \\
&=
\sum_{\{i,j\}} |\leaves(T[i \vee j])| \\
&= 
\cost_{K_n}(T)
\ = \ 
\frac{n^3 - n}{3}
\end{align*}
where $K_n$ is the complete graph on $n$ nodes, and the last equality is from Theorem~\ref{thm:complete}. In short: a tree that maximizes $\cost(T)$ on $G$ also minimizes $\cost(T)$ on $G^c$, and vice versa. A similar relation holds in the presence of edge weights, where the complementary weights $w^c$ are chosen so that $w(i,j) + w^c(i,j)$ is constant across all $i,j$.

In proving hardness, it will be convenient to work with the maximization version. 

\subsection{A variant of not-all-equal SAT}

We will reduce from a special case of not-all-equal SAT that we call \naesat:
\begin{quote}
\naesat

{\it Input:} A Boolean formula in CNF such that each clause has either two or three literals, and each variable appears exactly three times: once in a 3-clause and twice, with opposite polarities, in 2-clauses.

{\it Question:} Is there an assignment to the variables under which every clause has at least one satisfied literal and one unsatisfied literal?
\end{quote}

In a commonly-used version of NAESAT, which we will take to be standard, there are exactly three literals per clause but each variable can occur arbitrarily often. Given an instance $\phi(x_1, \ldots, x_n)$ of this type, we can create an instance $\phi'$ of \naesat as follows:
\begin{itemize}
\item Suppose variable $x_i$ occurs $k$ times in $\phi$. Without loss of generality, $k > 1$ (otherwise we can discard the containing clause with impunity). Create $k$ new variables $x_{i1}, \ldots, x_{ik}$ to replace these occurrences.
\item Enforce agreement amongst $x_{i1}, \ldots, x_{ik}$ by adding implications
$$ (\overline{x}_{i1} \vee x_{i2}), \ (\overline{x}_{i2} \vee x_{i3}), \ldots, (\overline{x}_{ik} \vee x_{i1}) .$$
These $k$ clauses are satisfied $\Longleftrightarrow$ they are not-all-equal satisfied $\Longleftrightarrow$ $x_{i1} = x_{i2} = \cdots = x_{ik}$.
\end{itemize}
Then $\phi'$ has the required form, and moreover is not-all-equal satisfiable if and only if $\phi$ is not-all-equal satisfiable.

\subsection{Hardness of maximizing $\cost(T)$}

We now show that it is NP-hard to find a tree that maximizes $\cost(T)$.
\begin{thm}
Given an instance $\phi$ of \naesat, we can in polynomial time specify a weighted graph $G = (V,E,w)$ and integer $M$ such that
\begin{itemize}
\item $\max_{e \in E} w(e)$ is polynomial in the size of $\phi$, and
\item $\phi$ is not-all-equal satisfiable if and only if there exists a tree $T$ with $\cost_G(T) \geq M$.
\end{itemize}
\label{thm:hardness}
\end{thm}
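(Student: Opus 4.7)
My plan is to reduce \naesat to maximizing $\cost_G(T)$. Given an instance $\phi$ with $n$ variables, $m_2$ 2-clauses, and $m_3=n/3$ 3-clauses, construct $G=(V,E,w)$ on the $2n$ literal-vertices $V=\{v_i^+, v_i^- : i\in[n]\}$ with three tiers of edges at weights $W_1=n^2$, $W_2=n$, $W_3=1$:
\begin{itemize}
\item a \emph{variable edge} $\{v_i^+,v_i^-\}$ of weight $W_1$ for each $i$;
\item a \emph{2-clause edge} of weight $W_2$ between the two literal nodes of each 2-clause;
\item three unit-weight \emph{3-clause edges} forming a triangle on the literal nodes of each 3-clause.
\end{itemize}
The \naesat structural constraint (each variable appears in exactly one 3-clause) makes the triangles of distinct 3-clauses vertex-disjoint. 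Set $M := 2n^2 W_1 + 2nm_2 W_2 + 5nm_3$; then $\max_e w(e)=W_1=n^2$ is polynomial in $|\phi|$ and $M$ is polynomial-time computable.

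For the forward direction, given a NAE-sat assignment $\sigma$, build the canonical tree whose top split separates $\{v_i^{\sigma(i)}\}$ from $\{v_i^{\neg \sigma(i)}\}$. On each side, the within-side 3-clause edges (one per 3-clause, between the two literals on the same side) form a matching by the vertex-disjointness of the triangles; since a matching is bipartite, I choose the top split of each side to cut every such edge, giving every within-side 3-clause edge the value $|\leaves(T[i\vee j])|=n$. A direct count then confirms $\cost_G(T)=M$: each variable edge and each 2-clause edge contributes $2n\cdot w$, and each 3-clause contributes $2(2n)+n=5n$ of unit-weight cost.

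For the reverse direction I argue the contrapositive in three cases, using Lemma~\ref{lemma:modularity} to localize each comparison. (i) If the top split of $T$ fails to cut some variable edge, then $|\leaves(T[v_i^+ \vee v_i^-])|\le 2n-1$ for some $i$, so $\cost_G(T) \le W_1(2n^2-1) + 2nm_2 W_2 + 2n\cdot 3m_3 = M - W_1 + nm_3 < M$ since $W_1 > nm_3$; hence the top split must encode an assignment $\sigma$. (ii) If $\sigma$ fails to NAE-sat some 2-clause, that edge lies inside a side of size $n$ and contributes at most $nW_2$, giving $\cost_G(T) \le 2n^2 W_1 + (2nm_2 - n)W_2 + 5nm_3 = M - nW_2 < M$. (iii) If $\sigma$ fails to NAE-sat some 3-clause, all three of its triangle edges lie inside one side of size $n$, and I invoke the triangle-LCA bound below to conclude that the total 3-clause contribution is at most $5nm_3-(2n+1)$, so $\cost_G(T)<M$.

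The main technical obstacle is the triangle-LCA bound: for any three leaves $a,b,c$ of a binary tree $T'$ on $n$ leaves, $|\leaves(T'[a\vee b])| + |\leaves(T'[b\vee c])| + |\leaves(T'[a\vee c])| \le 3n-1$. This follows because binarity forces (WLOG) $\text{lca}(a,c)$ and $\text{lca}(b,c)$ to equal the common ancestor $u:=a\vee b\vee c$, while $\text{lca}(a,b)$ is a strict descendant of $u$; thus the sum is at most $2|\leaves(u)| + (|\leaves(u)|-1) \le 3n-1$. Combined with the \naesat vertex-disjointness structure, this lemma makes the three cases above combine cleanly into the reverse direction, completing the reduction.
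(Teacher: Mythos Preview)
Your reduction is correct and takes a genuinely different route from the paper. The paper's gadget \emph{duplicates} every clause: for each 3-clause it places a unit-weight triangle on the three literal nodes \emph{and} a second triangle on their negations, and likewise two unit-weight edges per 2-clause; only the variable edges $\{x_i,\bar x_i\}$ are heavy ($W=2nm+1$). That symmetry makes the backward direction essentially a one-line edge count---a bipartition can cut at most two of each triangle's three edges, so once the top split is forced to separate every $x_i$ from $\bar x_i$, any further shortfall (an uncut 2-clause edge, or a triangle with all three edges on one side) already drops the cost below $M$, with no need for multiple weight tiers or an LCA lemma. You instead keep a single copy of each clause and compensate with three weight scales $W_1\gg W_2\gg 1$, which buys a smaller gadget but costs you the three-case analysis and the triangle-LCA bound. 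Both arguments are valid; your bound $\le 5n$ per 3-clause in case~(ii) holds regardless of whether the 3-clauses are NAE-satisfied (a violated one contributes at most $3n<5n$), so the cases need no special ordering. One caveat you share with the paper: the maximization is implicitly over \emph{binary} trees, since for the star tree with a single $2n$-ary split every edge sits at the root and $\cost=2n\cdot(\text{total weight})>M$ unconditionally. Your triangle lemma already assumes binarity; the restriction is harmless because the downstream hardness target is minimization, where binary trees are optimal.
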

\begin{proof}
Suppose instance $\phi(x_1, \ldots, x_n)$ has $m$ clauses of size three and $m'$ clauses of size two. We will assume that certain redundancies are removed from $\phi$:
\begin{itemize}
\item If there exists a 2-clause $C$ whose literals also appear in a 3-clause $C'$, then $C'$ is redundant and can be removed. 
\item The same holds if the 3-clause $C'$ contains the two literals of $C$ with polarity reversed. This is because $(x \vee y)$ is not-all-equal satisfied if and only if $(\overline{x} \vee \overline{y})$ is not-all-equal satisfied.
\item Likewise, if there is a 2-clause $C$ whose literals also appear in a 2-clause $C'$, but with reversed polarity, then $C'$ can be removed.
\item Any clause that contains both a literal and its negation can be removed.
\end{itemize}

Based on $\phi$, construct a weighted graph $G = (V,E,w)$ with $2n$ vertices, one per literal ($x_i$ and $\overline{x}_i$). The edges $E$ fall into three categories:
\begin{enumerate}
\item For each 3-clause, add six edges to $E$: one edge joining each pair of literals, and one edge joining the negations of those literals. These form two triangles. For instance, the clause $(x_1 \vee \overline{x}_2 \vee x_3)$ becomes:

\begin{center}
\includegraphics[width=2.75in]{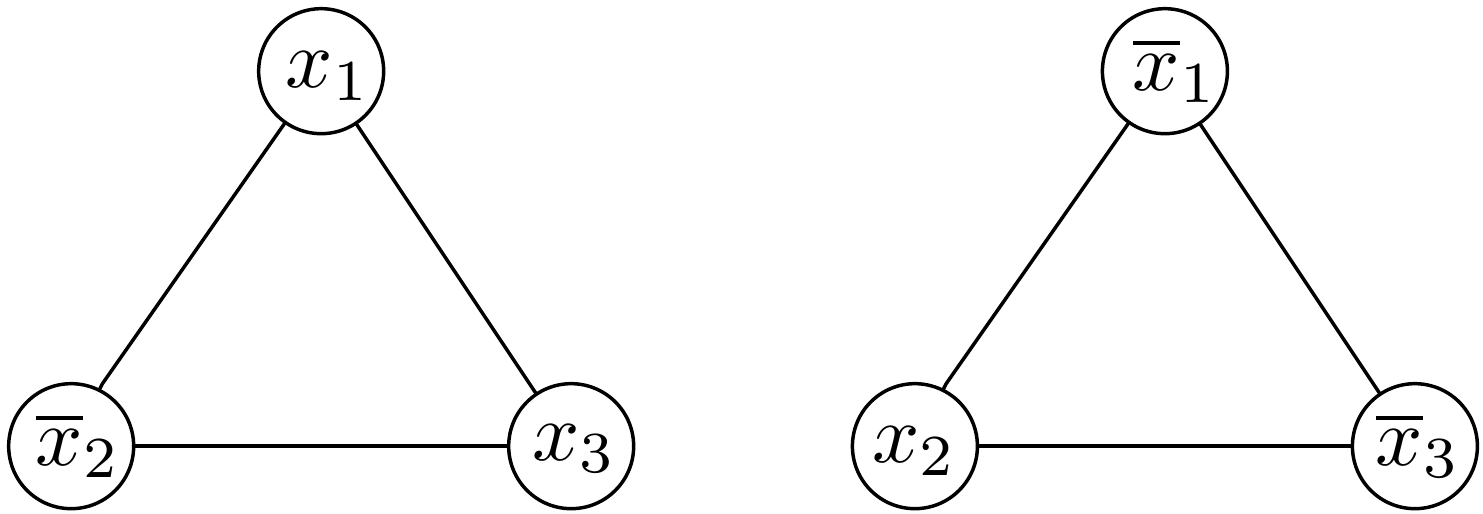}
\end{center}

These edges have weight 1.

\item Similarly, for each 2-clause, add two edges of unit weight: one joining the two literals, and one joining their negations.

\item Finally, add $n$ edges $\{x_i, \overline{x}_i\}$ of weight $W = 2nm+1$.
\end{enumerate}
Under the definition of \naesat, and with the removal of redundancies, these $6m + 2m' + n$ edges are all distinct.

Now suppose $\phi$ is not-all-equal satisfiable. Let $V^+ \subset V$ be the positive literals under the satisfying assignment, and $V^-$ the negative literals. Consider a two-level tree $T$:

\begin{center}
\includegraphics[width=2in]{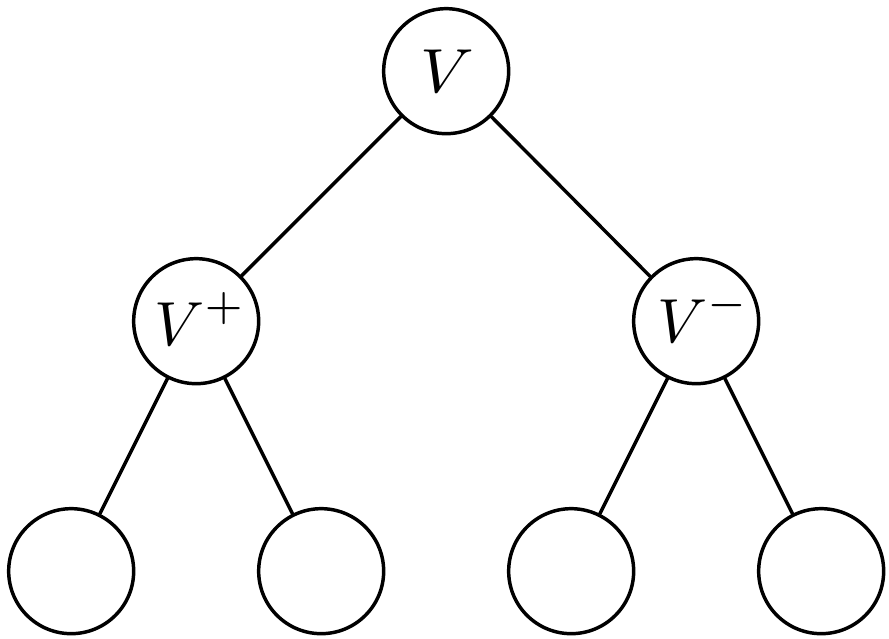}
\end{center}

The top split cuts all edges of $G$, except for one edge per triangle (and there are $2m$ of these). The cost of this split is
$$ |V| \cdot w(V^+, V^-) = 2n(4m + 2m' + nW) .$$
By the symmetry of the construction, the remaining $2m$ edges are evenly split between $V^+$ and $V^-$. The $m$ edges that lie entirely within $V^+$ contain at most one instance of each variable. Therefore, they are node-disjoint, and a single split suffices to cut all of them. The same holds for $V^-$. The total cost of the second level of splits is thus $|V^+| \cdot m + |V^-| \cdot m = 2nm$. In sum,
$$ \cost_G(T) = 2n(4m + 2m' + nW) + 2nm = 10nm + 4 nm' + 2n^2W .$$
Call this $M$.

Conversely, suppose there exists a tree $T$ of cost $\geq M$. The top split of $T$ must cut all of the edges $\{x_i, \overline{x}_i\}$; if not, $\cost(T)$ would be at most $2n(6m + 2m' + nW) - W < M$. Therefore, the split divides $V$ into two sets of size $n$, call them $V^+$ and $V^-$, such that each variable appears with one polarity in $V^+$ and the opposite polarity in $V^-$. 

This top split $V \rightarrow (V^+, V^-)$ necessarily leaves at least one edge per triangle untouched. However, it must cut all other edges, otherwise the total cost would again fall below $M$. It follows that $V^+$ is a not-all-equal satisfying assignment for $\phi$.
\end{proof}

\section{A greedy approximation algorithm}
\label{sec:alg}

Given a graph $G$ with weighted edges, we have seen that it is NP-hard to find a tree that minimizes $\cost(T)$. We now consider top-down heuristics that begin by choosing a split $V \rightarrow (S, V \setminus S)$ according to some criterion, and then recurse on each half. What a suitable split criterion?

The cost of a split $(S,V \setminus S)$ is $|V| \cdot w(S, V\setminus S)$. Ideally, we would like to shrink the node set substantially in the process, since this reduces the multiplier on subsequent splits. For a node chosen at random, the expected amount by which its cluster shrinks as a result of the split is
$$ \frac{|S|}{|V|} \cdot |V \setminus S| + \frac{|V\setminus S|}{|V|} \cdot |S| = \frac{2 |S| |V \setminus S|}{|V|} .$$ 
A natural greedy criterion would therefore be to choose the split that yields the maximum shrinkage per unit cost, or equivalently, the minimum ratio
$$ \frac{w(S,V \setminus S)}{|S| \cdot |V\setminus S|} .$$
This is known as the {\it sparsest cut} and has been studied intensively in a wide range of contexts. Although it is NP-hard to find an optimal cut, a variety of good approximation algorithms have been developed~\cite{KL70,LR99,ARV09,vL07}. We will assume simply that we have a heuristic whose approximation ratio, on graphs of $n$ nodes, is at most $\alpha_n$ times optimal, for some positive nondecreasing sequence $(\alpha_n)$. For instance, the Leighton-Rao algorithm~\cite{LR99} has $\alpha_n = O(\log n)$.

\begin{figure}
\begin{center}
\framebox[6.25in]{
\begin{minipage}[t]{6in}
\begin{tt}
\begin{tabbing}
\underline{function MakeTree(V)} \\
\= If $|V| = 1$: return leaf containing the singleton element in $V$ \\
      \> Let $(S,V\setminus S)$ be an $\alpha_n$-approximation to the sparsest cut of $V$ \\
      \> $\mbox{LeftTree} = \mbox{MakeTree}(S)$ \\
      \> $\mbox{RightTree} = \mbox{MakeTree}(V \setminus S)$ \\
      \> Return $[\mbox{LeftTree}, \mbox{RightTree}]$
\end{tabbing}
\end{tt}
\end{minipage}}
\end{center}
\caption{A top-down heuristic for finding a hierarchical clustering that approximately minimizes $\cost(T)$.}
\label{fig:alg}
\end{figure}

The resulting hierarchical clustering algorithm is shown in Figure~\ref{fig:alg}. We will now see that it returns a tree of cost at most $O(\alpha_n \log n)$ times optimal. The first step is to show that if there exists a low cost tree, there must be a correspondingly sparse cut of $V$.
\begin{lemma}
Pick any tree $T$ on $V$. There exists a partition $A,B$ of $V$ such that
$$ \frac{w(A,B)}{|A| \cdot|B|} < \frac{27}{4 |V|^3} \, \cost(T) ,$$
and for which $|V|/3 \leq |A|, |B| \leq 2|V|/3$.
\label{lemma:sparsecut}
\end{lemma}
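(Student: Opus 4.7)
The plan is to exhibit a balanced partition coming from a subtree of $T$: descend from the root, always going to the child whose subtree has the most leaves, and stop at the first node $u$ with $|\leaves(T[u])| \leq 2|V|/3$. Then set $A = \leaves(T[u])$ and $B = V \setminus A$.

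First I would reduce to the case that $T$ is binary. This is legitimate because any non-binary split $S \to (S_1, \ldots, S_k)$ can be replaced by a cascade of binary splits without increasing its contribution to $\cost(T)$, as can be checked directly from the sum-of-splits formulation in (\ref{eq:split-formulation}); proving the lemma for a binary tree of cost $\leq \cost(T)$ therefore suffices. With $T$ binary, the stopping rule guarantees that every strict ancestor of $u$ has subtree size $> 2|V|/3$. In particular the parent of $u$ does, and since $u$ is the larger of two siblings summing to the parent's size, $|\leaves(T[u])| > |V|/3$. So $|V|/3 < |A|, |B| \leq 2|V|/3$ as required.

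Next I would bound $w(A,B)$ from above. For any edge $\{i,j\} \in E$ with $i \in A$ and $j \in B$, the lowest common ancestor $i \vee j$ must be a strict ancestor of $u$ in $T$, so $|\leaves(T[i \vee j])| > 2|V|/3$. Keeping only these crossing edges in the definition of $\cost(T)$ gives
$$ \cost(T) \ \geq \ \sum_{\substack{\{i,j\} \in E \\ i \in A,\, j \in B}} w_{ij}\, |\leaves(T[i \vee j])| \ > \ \tfrac{2|V|}{3}\, w(A,B),$$
hence $w(A,B) < \tfrac{3}{2|V|} \cost(T)$.

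Finally, since $|A| + |B| = |V|$ with both sides in $[|V|/3, 2|V|/3]$, the product $|A| \cdot |B|$ is minimized at an endpoint, giving $|A| \cdot |B| \geq (|V|/3)(2|V|/3) = 2|V|^2/9$. Dividing the two estimates yields the advertised bound $27 \cost(T)/(4|V|^3)$. The only subtle step is the descent setup — binarizing $T$ and verifying that ``follow the larger child'' lands in the window $(|V|/3, 2|V|/3]$; once these are in place, the weight and size bounds combine with a short calculation.
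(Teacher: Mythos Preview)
Your proof is correct and follows essentially the same route as the paper: descend from the root along the heavier child until the subtree first has at most $2|V|/3$ leaves, take $A$ to be that subtree and $B$ its complement, and combine the bound $\cost(T) > \tfrac{2|V|}{3}\, w(A,B)$ with $|A|\cdot|B| \geq 2|V|^2/9$. The only differences are cosmetic — you make the binarization step explicit (the paper tacitly assumes binary splits) and you bound $\cost(T)$ via the per-edge formula rather than summing split costs along the root-to-$u$ path, but the partition and the arithmetic are identical.
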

\begin{proof}
The diagram below shows a portion of $T$, drawn in a particular way. Each node $u$ is labeled with the leaf set of its induced subtree $T[u]$. We always depict the smaller half of each cut $(A_i, B_i)$ on the right, so that $|A_i| \geq |B_i|$.

\begin{center}
\includegraphics[width=3in]{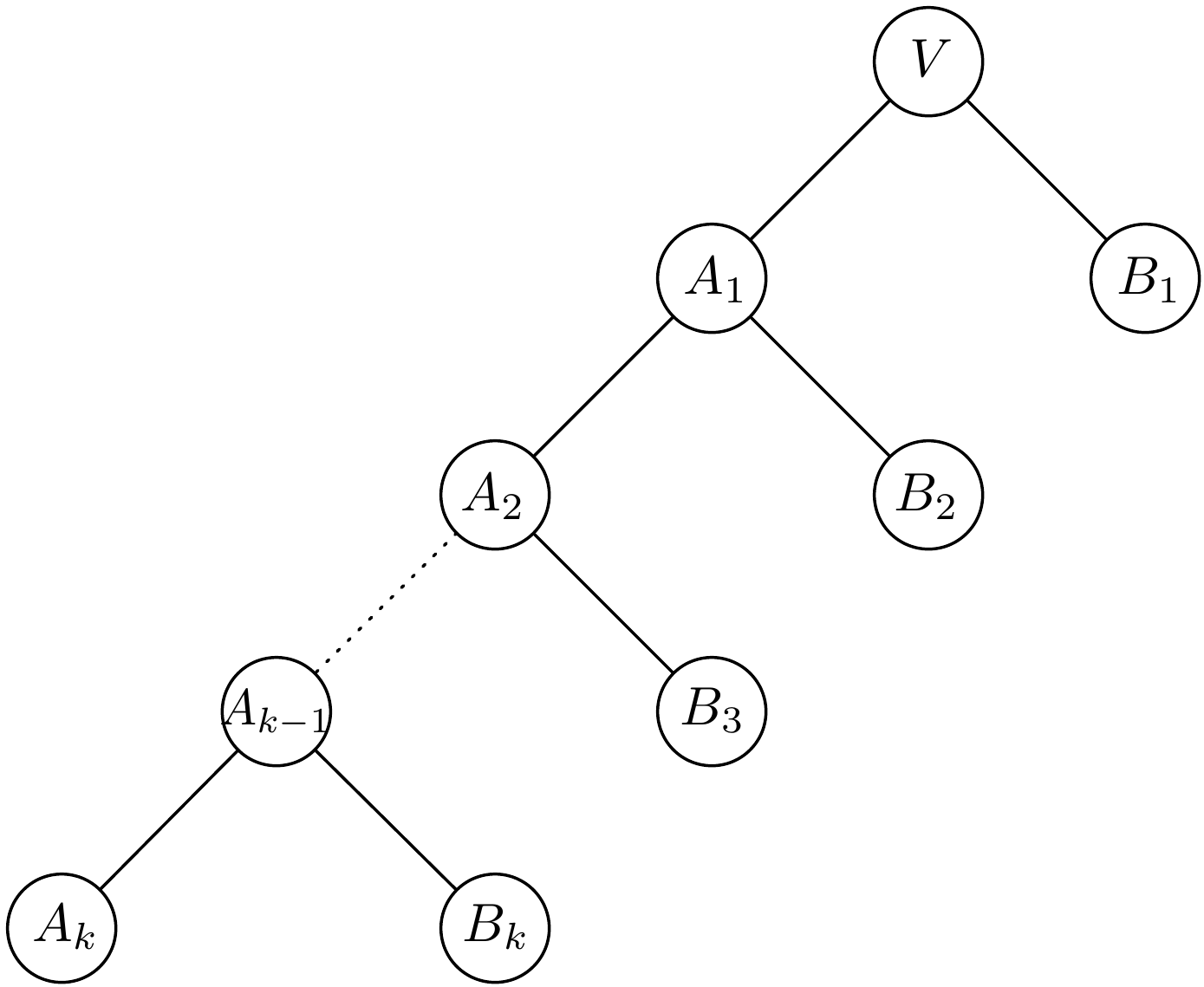}
\end{center}

Let $n = |V|$. Pick the smallest $k$ so that $|B_1| + \cdots + |B_k| \geq n/3$. This also means $|A_{k-1}| > 2n/3$ (to deal with cases when $k=1$, let $A_o = V$) and $|A_k| > n/3$. Define $A = A_k$ and $B = B_1 \cup \cdots \cup B_k$.

Now,
\begin{align*}
\mbox{cost}(T)
&\geq
w(A_1,B_1) \cdot n + w(A_2, B_2) \cdot |A_1| + \cdots + w(A_k,B_k) \cdot |A_{k-1}| \\
&> 
\frac{2n}{3} \left( w(A_1, B_1) + \cdots + w(A_k, B_k) \right) \\
&\geq
\frac{2n}{3} \, w(A, B)
\end{align*}
since the removal of all the edges in the cuts $(A_1, B_1), \ldots, (A_k, B_k)$ disconnects $A_k$ from $B_1 \cup \cdots \cup B_k$. Therefore,
$$ \frac{w(A,B)}{|A| \cdot |B|} 
\ < \ 
\frac{3}{2n} \cdot \frac{\cost(T)}{(2n/3) \cdot (n/3)}
\ = \ 
\frac{27}{4n^3} \, \cost(T),
$$
as claimed.
\end{proof}

\begin{thm}
Pick any graph $G$ on $n$ vertices, with positive edge weights $w: E \rightarrow \R^+$. Let tree $T^*$ be a minimizer of $\cost_G(\cdot)$ and let $T$ be the tree returned by the top-down algorithm of Figure~\ref{fig:alg}. Then
$$ \cost_G(T) \ \leq \ ( c_n \ln n ) \cost_G(T^*) ,$$
for $c_n = 27 \alpha_n / 4$.
\label{thm:alg}
\end{thm}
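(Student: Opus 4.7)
My plan is to prove the bound by induction on $n$, comparing the algorithm's recursive behavior to the optimum through the key relationship between sparse cuts and low-cost trees established in Lemma~\ref{lemma:sparsecut}. Let $\psi(H)$ denote the cost of the tree produced by the algorithm on a weighted graph $H$, and let $\mathrm{OPT}(H) = \min_T \cost_H(T)$. I will prove by strong induction on $|V(H)|$ that $\psi(H) \leq c_{|V(H)|} (\ln |V(H)|) \, \mathrm{OPT}(H)$, where $c_m = 27 \alpha_m / 4$. The base case $|V| = 1$ is trivial since both sides are zero.

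For the inductive step on a graph $G$ with $n$ vertices, I would first use Lemma~\ref{lemma:sparsecut} applied to any optimal tree $T^*$ for $G$: it guarantees a cut whose sparsity is strictly less than $\frac{27}{4n^3}\,\mathrm{OPT}(G)$. Since the algorithm produces an $\alpha_n$-approximation to the sparsest cut, its chosen split $(S, V\setminus S)$ satisfies $w(S, V\setminus S) \leq \frac{27 \alpha_n}{4 n^3}\,\mathrm{OPT}(G) \cdot s(n-s)$, where $s = |S| \leq n/2$ (WLOG). Multiplying by $n$ to obtain the split cost gives
\begin{equation*}
n\, w(S, V\setminus S) \;\leq\; \frac{27 \alpha_n\, s(n-s)}{4 n^{2}}\,\mathrm{OPT}(G).
\end{equation*}

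Next I need a subadditivity statement: $\mathrm{OPT}(G|_S) + \mathrm{OPT}(G|_{V\setminus S}) \leq \mathrm{OPT}(G)$. This follows by restricting $T^*$ to each side (pruning branches not leading to $S$, and similarly for $V\setminus S$); for any pair $i,j$ on the same side, the number of leaves under their LCA in the restricted tree is at most in $T^*$, so the restricted cost is bounded by the contribution of same-side edges to $\cost_G(T^*)$, and the two contributions together are at most $\mathrm{OPT}(G)$. Applying the inductive hypothesis to both subproblems and using that $c_m$ is nondecreasing in $m$, I obtain
\begin{equation*}
\psi(G|_S) + \psi(G|_{V\setminus S}) \;\leq\; c_n \ln(n-s)\bigl[\mathrm{OPT}(G|_S) + \mathrm{OPT}(G|_{V\setminus S})\bigr] \;\leq\; c_n \ln(n-s)\, \mathrm{OPT}(G).
\end{equation*}

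Combining the two contributions, the induction closes provided the elementary inequality $\ln\!\bigl(n/(n-s)\bigr) \geq s(n-s)/n^{2}$ holds. Writing $t = s/n \in (0, 1/2]$ this becomes $-\ln(1-t) \geq t(1-t)$, which is immediate from the series $-\ln(1-t) = \sum_{k\geq 1} t^{k}/k \geq t \geq t(1-t)$. This lets me conclude $\psi(G) \leq c_n (\ln n)\, \mathrm{OPT}(G)$, completing the induction. The main potential obstacle is getting the recursion accounting right: the split may be quite unbalanced, so one must work with $\ln(n-s)$ rather than $\ln(n/2)$ and absorb the split cost into the gap $\ln n - \ln(n-s)$, which is precisely what the logarithmic inequality above accomplishes. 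Everything else is a direct application of the lemmas already in hand.
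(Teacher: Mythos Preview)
Your proposal is correct and follows essentially the same route as the paper: induction on $n$, Lemma~\ref{lemma:sparsecut} to bound the sparsity of the algorithm's cut, the restriction argument to get $\mathrm{OPT}(G|_S)+\mathrm{OPT}(G|_{V\setminus S})\le \mathrm{OPT}(G)$, and an elementary logarithmic inequality to close the recursion. The only cosmetic difference is that the paper bounds $p(1-p)\le p$ and then uses $p+\ln(1-p)\le 0$, whereas you keep $t(1-t)$ and verify $-\ln(1-t)\ge t(1-t)$ directly; these are the same step.
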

\begin{proof}
We'll use induction on $n$. The case $n=1$ is trivial since any tree with one node has zero cost.

Assume the statement holds for graphs of up to $n-1$ nodes. Now pick a weighted graph $G = (V,E,w)$ with $n$ nodes, and let $T^*$ be an optimal tree for it. By Lemma~\ref{lemma:sparsecut}, the sparsest cut of $G$ has ratio at most
$$ \frac{27}{4n^3} \cost(T^*) .$$
The top-down algorithm identifies a partition $(A,B)$ of $V$ that is an $\alpha_n$-approximation to this cut, so that
$$ \frac{w(A,B)}{|A| \cdot |B|} 
\ \leq \ \frac{27 \alpha_n}{4n^3} \, \cost(T^*) 
\  = \ \frac{c_n}{n^3} \cost(T^*),$$
and then recurses on $A$ and $B$.

We now obtain an upper bound on the cost of the best tree for $A$ (and likewise $B$). Start with $T^*$, restrict all cuts to edges within $A$, and disregard subtrees that are disjoint from $A$. The resulting tree, call it $T_A^*$, has the same overall structure as $T^*$. Construct $T_B^*$ similarly. Now, for any split $S \rightarrow (S_1, S_2)$ in $T^*$ there are corresponding (possibly empty) splits in $T_A^*$ and $T_B^*$. Moreover, the cut edges in $T_A^*$ and $T_B^*$ are disjoint subsets of those in $T^*$; hence the cost of this particular split in $T_A^*$ and $T_B^*$ combined is at most the cost in $T^*$. Formally,
\begin{align*}
(\mbox{cost of split in $T_A^*$}) + (\mbox{cost of split in $T_B^*$})
&=
|S \cap A| \cdot w(S_1 \cap A, S_2 \cap A) + |S \cap B| \cdot w(S_1 \cap B, S_2 \cap B) \\
&\leq
|S| \cdot w(S_1 \cap A, S_2 \cap A) + |S| \cdot w(S_1 \cap B, S_2 \cap B) \\
&\leq
|S| \cdot w(S_1, S_2)
\ = \ 
\mbox{cost of split in $T^*$}.
\end{align*}
Summing over all splits, we have
$$ \cost(T_A^*) + \cost(T_B^*) \leq \cost(T^*) .$$
Here the costs on the left-hand side are with respect to the subgraphs of $G$ induced by $A$ and $B$, respectively.

Without loss of generality, $|A| = pn$ and $|B| = (1-p) n$, for some $0 < p < 1/2$. Recall that our algorithm recursively constructs trees, say $T_A$ and $T_B$, for subsets $A$ and $B$. Applying the inductive hypothesis to these trees, we have
\begin{align*}
\cost(T_A) &\leq c_n \cost(T_A^*) \ln pn \\
\cost(T_B) &\leq c_n \cost(T_B^*) \ln (1-p)n
\end{align*}
where we have used the monotonicity of $(\alpha_n)$, and hence
\begin{align*}
\cost(T)
&\leq
n \cdot w(A,B) + \cost(T_A) + \cost(T_B) \\
&\leq
n \cdot |A| \cdot |B| \cdot \frac{c_n}{n^3} \cdot \cost(T^*) + c_n \cost(T_A^*) \ln pn + c_n \cost(T_B^*) \ln (1-p)n \\
&\leq
c_n p (1-p) \cdot \cost(T^*) + c_n \cost(T_A^*) \ln (1-p)n + c_n \cost(T_B^*) \ln (1-p)n \\
&\leq
c_n p \cdot \cost(T^*) + c_n \cost(T^*) \ln (1-p) n \\
&=
c_n \cost(T^*) \left( p + \ln (1-p) + \ln n \right) \\
&\leq 
c_n \cost(T^*) \ln n,
\end{align*}
as claimed.
\end{proof}

\section{A generalization of the cost function}

A more general objective function for hierarchical clustering is
$$ \cost_G(T) = \sum_{\{i,j\} \in E} w_{ij} \, f(|\leaves(T[i \vee j])|) ,$$
where $f$ is defined on the nonnegative reals, is strictly increasing, and has $f(0) = 0$. For instance, we could take $f(x) = \ln (1 + x)$ or $f(x) = x^2$.

Under this generalized cost function, all the properties of Section~\ref{sec:basics} continue to hold, substituting $|S|$ by $f(|S|)$ as needed, for $S \subseteq V$. However, it is no longer the case that for the complete graph, all trees have equal cost. When $G$ is the clique on four nodes, for instance, the two trees shown below ($T_1$ and $T_2$) need not have the same cost. 

\begin{center}
\includegraphics[width=5in]{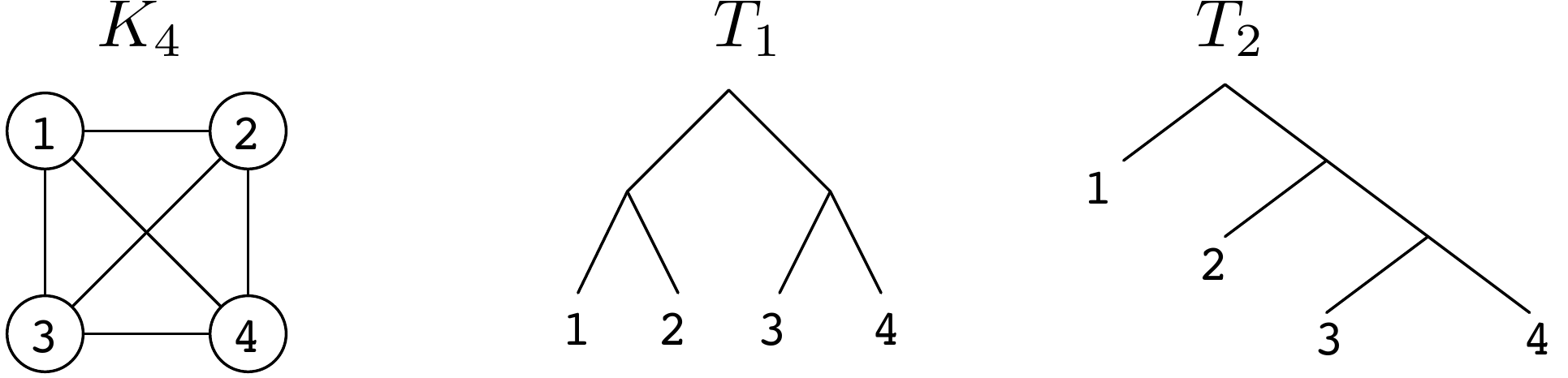}
\end{center}

The tree costs, for any $f$, are:
\begin{align*}
\cost(T_1) &= 4 f(4) + 2 f(2) \\
\cost(T_2) &= 3 f(4) + 2 f(3) + f(2) 
\end{align*}
Thus $T_1$ is preferable if $f$ is concave (in which case $f(2) + f(4) \leq 2 f(3)$), while the tables are turned if $f$ is convex.

Nevertheless, the greedy top-down heuristic continues to yield a provably good approximation. It needs to be modified slightly: the split chosen is (an $\alpha_n$-approximation to) the minimizer $S \subseteq V$ of
$$
\frac{w(S, V \setminus S)}{\min(f(|S|), f(|V \setminus S|))} 
\mbox{\ \ subject to\ \ }\frac{1}{3}|V| \leq |S| \leq \frac{2}{3}|V| .
$$

Lemma~\ref{lemma:sparsecut} and Theorem~\ref{thm:alg} need slight revision.
\begin{lemma}
Pick any tree $T$ on $V$. There exists a partition $A,B$ of $V$ such that
$$ \frac{w(A,B)}{\min(f(|A|),f(|B|))} \leq \frac{\cost(T)}{f(\lfloor 2n/3\rfloor) f(\lceil n/3 \rceil)} .$$
and for which $|V|/3 \leq |A|, |B| \leq 2|V|/3$.
\label{lemma:sparsecut-generalized}
\end{lemma}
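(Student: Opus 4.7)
The plan is to replay the construction used in Lemma~\ref{lemma:sparsecut} essentially verbatim, and only change the bookkeeping at the end to account for $f$. I start at the root of $T$ and repeatedly descend to the larger child, obtaining a sequence of splits $(A_1,B_1),(A_2,B_2),\ldots$ with $|A_i|\ge|B_i|$ and $|A_i|+|B_i|=|A_{i-1}|$, setting $A_0=V$. I pick the smallest $k$ with $|B_1|+\cdots+|B_k|\ge n/3$ and take $A=A_k$ and $B=B_1\cup\cdots\cup B_k$. Exactly the reasoning from the original lemma gives $|A_{k-1}|>2n/3$ (since the preceding $B$'s sum to less than $n/3$) and $|A_k|>n/3$ (since $|A_k|\ge|A_{k-1}|/2$), so $n/3\le|A|,|B|\le 2n/3$ as required.

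Under the generalized cost function, each split $(A_i,B_i)$ in $T$ contributes $f(|A_{i-1}|)\,w(A_i,B_i)$ to $\cost(T)$. Because $|A_0|\ge|A_1|\ge\cdots\ge|A_{k-1}|$ and $f$ is monotone,
$$\cost(T)\ \ge\ \sum_{i=1}^{k} f(|A_{i-1}|)\,w(A_i,B_i)\ \ge\ f(|A_{k-1}|)\sum_{i=1}^{k} w(A_i,B_i)\ \ge\ f(|A_{k-1}|)\,w(A,B),$$
where the last step (as in the original proof) uses that removing the edges cut by $(A_1,B_1),\ldots,(A_k,B_k)$ separates $A$ from $B$.

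All that remains is to read off the denominator. Since $|A_{k-1}|$ and $\min(|A|,|B|)$ are integers satisfying $|A_{k-1}|>2n/3$ and $\min(|A|,|B|)\ge n/3$, monotonicity of $f$ gives $f(|A_{k-1}|)\ge f(\lfloor 2n/3\rfloor)$ and $\min(f(|A|),f(|B|))=f(\min(|A|,|B|))\ge f(\lceil n/3\rceil)$, so
$$\frac{w(A,B)}{\min(f(|A|),f(|B|))}\ \le\ \frac{\cost(T)/f(|A_{k-1}|)}{f(\lceil n/3\rceil)}\ \le\ \frac{\cost(T)}{f(\lfloor 2n/3\rfloor)\,f(\lceil n/3\rceil)}.$$
There is no real obstacle here; the whole argument is the substitution $|S|\leadsto f(|S|)$ in the cost accounting, followed by one application of monotonicity. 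The only thing to verify carefully is the integer bookkeeping producing the floors and ceilings in the denominator, which is immediate from $|A_{k-1}|,\min(|A|,|B|)\in\mathbb{Z}$.
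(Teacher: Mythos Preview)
Your proof is correct and follows essentially the same approach as the paper's own proof: you reuse the construction of $A,B$ from Lemma~\ref{lemma:sparsecut}, replace $|S|$ by $f(|S|)$ in the cost accounting, and invoke monotonicity of $f$ together with the integrality of $|A_{k-1}|$ and $\min(|A|,|B|)$ to obtain the floor and ceiling in the denominator. The paper's proof is slightly terser but identical in substance.
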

\begin{proof}
Sets $A,B$ are constructed exactly as in the proof of Lemma~\ref{lemma:sparsecut}. Then, referring to the diagram in that proof,
\begin{align*}
\mbox{cost}(T)
&\geq
w(A_1,B_1) \cdot f(n) + w(A_2, B_2) \cdot f(|A_1|) + \cdots + w(A_k,B_k) \cdot f(|A_{k-1}|) \\
&\geq 
f( \lfloor 2n/3 \rfloor) \left( w(A_1, B_1) + \cdots + w(A_k, B_k) \right) \ \geq \ f(\lfloor 2n/3 \rfloor) \, w(A, B),
\end{align*}
whereupon
$$ \frac{w(A,B)}{\min(f(|A|),f(|B|))} 
\ \leq \ 
\frac{\cost(T)}{f(\lfloor 2n/3 \rfloor) f(\lceil n/3 \rceil)}.
$$
\end{proof}

\begin{thm}
Pick any graph $G$ on $n$ vertices, with positive edge weights $w: E \rightarrow \R^+$. Let tree $T^*$ be a minimizer of $\cost(\cdot)$ and let $T$ be the tree returned by the modified top-down algorithm. Then
$$ \cost_G(T) \leq (c_n \ln n)\, \cost_G(T^*) ,$$
where
$$ c_n = 3 \alpha_n \cdot \max_{1 \leq n' \leq n} \frac{f(n')}{f(\lceil n'/3 \rceil)} .$$
\label{thm:alg-generalized}
\end{thm}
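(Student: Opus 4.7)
The plan is to proceed by induction on $n$, mirroring the proof of Theorem~\ref{thm:alg} but working with the balanced sparsest-cut quantity from Lemma~\ref{lemma:sparsecut-generalized}. The base case $n=1$ is trivial. For the inductive step, I would take a weighted graph $G$ on $n$ nodes with optimal tree $T^*$ and let $(A,B)$ be the partition chosen at the top level by the modified algorithm, with $|A|, |B| \in [n/3, 2n/3]$.

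First, applying Lemma~\ref{lemma:sparsecut-generalized} to $T^*$ shows that the minimum of the constrained sparsest-cut objective is at most $\cost(T^*)/(f(\lfloor 2n/3\rfloor) f(\lceil n/3\rceil))$. Since the algorithm returns an $\alpha_n$-approximation, $(A,B)$ satisfies
$$ \frac{w(A,B)}{\min(f(|A|), f(|B|))} \ \leq \ \frac{\alpha_n\, \cost(T^*)}{f(\lfloor 2n/3\rfloor) f(\lceil n/3\rceil)} .$$
Because $f$ is increasing and $|A|, |B| \in [\lceil n/3\rceil, \lfloor 2n/3\rfloor]$, we have $\min(f(|A|), f(|B|)) \leq f(\lfloor 2n/3\rfloor)$; this cancels the matching factor in the denominator, so the top-level splitting cost satisfies
$$ f(n)\cdot w(A,B) \ \leq \ \frac{\alpha_n\, f(n)}{f(\lceil n/3\rceil)}\, \cost(T^*) \ \leq \ \frac{c_n}{3}\, \cost(T^*) ,$$
by the definition $c_n = 3\alpha_n \max_{n' \leq n} f(n')/f(\lceil n'/3\rceil)$.

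Next I would handle the recursive cost. The modularity argument from Theorem~\ref{thm:alg}---restrict $T^*$ to the vertex sets $A$ and $B$ to form $T_A^*, T_B^*$---carries over unchanged and yields $\cost(T_A^*) + \cost(T_B^*) \leq \cost(T^*)$, since that argument never relied on the specific linear form of the $|S|$-multiplier, only on the fact that the multiplier is nondecreasing in the cluster size. The induction hypothesis then bounds $\cost(T_A) \leq c_{|A|}\cost(T_A^*)\ln |A|$ and $\cost(T_B) \leq c_{|B|}\cost(T_B^*)\ln |B|$. The sequence $(c_{n'})$ is nondecreasing because both $\alpha_{n'}$ and the max defining $c_{n'}$ are nondecreasing in $n'$, so both constants can be replaced by $c_n$, and $\max(|A|, |B|) \leq 2n/3$ gives
$$ \cost(T_A) + \cost(T_B) \ \leq \ c_n\, \cost(T^*)\, \ln(2n/3) .$$

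Finally, combining the two contributions and using the numerical inequality $\tfrac{1}{3} + \ln(2/3) < 0$,
$$ \cost(T) \ \leq \ \frac{c_n}{3}\cost(T^*) + c_n \cost(T^*) \ln(2n/3) \ = \ c_n\cost(T^*)\left(\ln n + \tfrac{1}{3} + \ln(2/3)\right) \ \leq \ c_n\cost(T^*)\ln n ,$$
completing the induction. The main obstacle is essentially bookkeeping: verifying that the balance constraint $|A|, |B| \geq n/3$ simultaneously makes the balanced sparsest cut achievable (via Lemma~\ref{lemma:sparsecut-generalized}) and forces a genuine $\Omega(1)$ gap between $\ln(2n/3)$ and $\ln n$ that is just large enough to absorb the non-shrinking top-level charge $(c_n/3)\cost(T^*)$ into the final $c_n\ln n$ bound. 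Once the constant $c_n$ is pinned down as $3\alpha_n \cdot \max f(n')/f(\lceil n'/3\rceil)$, everything else follows the template of Theorem~\ref{thm:alg}.
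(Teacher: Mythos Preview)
Your proposal is correct and follows essentially the same route as the paper: induction on $n$, bounding the top-level splitting cost $f(n)\,w(A,B)$ via Lemma~\ref{lemma:sparsecut-generalized} by $(c_n/3)\cost(T^*)$, restricting $T^*$ to $A$ and $B$ to obtain $\cost(T_A^*)+\cost(T_B^*)\le\cost(T^*)$, applying the inductive hypothesis with $\max(|A|,|B|)\le 2n/3$, and closing with the inequality $\tfrac{1}{3}+\ln(2/3)<0$. Your explicit verification that $(c_{n'})$ is nondecreasing (via monotonicity of both $(\alpha_{n'})$ and the running maximum) is a detail the paper leaves implicit but uses in the same way.
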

\begin{proof}
The proof outline is as before.  We assume the statement holds for graphs of up to $n-1$ nodes. Now pick a weighted graph $G = (V,E,w)$ with $n$ nodes, and let $T^*$ be an optimal tree for it. 

Let $T$ denote the tree returned by the top-down procedure. By Lemma~\ref{lemma:sparsecut-generalized}, the top split $V \rightarrow (A,B)$ satisfies $n/3 \leq |A|, |B| \leq 2n/3$ and 
$$ w(A,B) 
\ \leq \ 
\alpha_n \, {\min(f(|A|), f(|B|))} \, \frac{\cost(T^*)}{f(\lfloor 2n/3 \rfloor) f(\lceil n/3 \rceil)}
\ \leq \ 
\frac{\alpha_n}{f(\lceil n/3 \rceil)} \cost(T^*), $$
where the last inequality uses the monotonicity of $f$.

As before, the optimal tree $T^*$ can be used to construct trees $T_A^*$ and $T_B^*$ for point sets $A$ and $B$, respectively, such that $\cost(T_A^*) + \cost(T_B^*) \leq \cost(T^*)$. Meanwhile, the top-down algorithm finds trees $T_A$ and $T_B$ which, by the inductive hypothesis, satisfy
\begin{align*}
\cost(T_A) &\leq (c_n \ln |A|) \cost(T_A^*)  \\
\cost(T_B) &\leq (c_n \ln |B|) \cost(T_B^*) 
\end{align*}
Let's say, without loss of generality, that $|A| \leq |B|$. Then
\begin{align*}
\cost(T)
&\leq
f(n) \cdot w(A,B) + \cost(T_A) + \cost(T_B) \\
&\leq
f(n) \cdot w(A,B) + (c_n \ln |B|) (\cost(T_A^*) + \cost(T_B^*)) \\
&\leq
\frac{\alpha_n f(n)}{f(\lceil n/3 \rceil)} \cost(T^*) + (c_n \ln n + c_n \ln (2/3)) \cost(T^*) \\
&\leq
\frac{c_n}{3} \cost(T^*) + (c_n \ln n  - c_n/3) \cost(T^*)
\ = \ 
(c_n \ln n) \cost(T^*),
\end{align*}
as claimed.
\end{proof}

\subsection*{Acknowledgements}

The author is grateful to the National Science Foundation for support under grant IIS-1162581.

\bibliography{/Users/dasgupta314/Dropbox/PAPERS/sanjoy}
\bibliographystyle{plain}

\end{document}